\documentclass[10pt,final,twocolumn]{IEEEtran}
\usepackage{cite}
\usepackage{psfrag}
\usepackage{latexsym, amsmath, color, amsfonts, amssymb,graphicx}
\usepackage{algorithm}
\usepackage{algorithmic}

\newtheorem{theorem}{Theorem}[section]
\newtheorem{lemma}[theorem]{Lemma}

\newtheorem{proposition}[theorem]{Proposition}
\newtheorem{remark}[theorem]{Remark}

\newcommand{\R}{{\rm  I\kern-2pt R}}


\begin{document}
	\title{LQG Control Over SWIPT-enabled \\Wireless Communication Network}
	\author{
		Huiwen~Yang,~Lingying~Huang,~Yuzhe~Li,~Subhrakanti~Dey,~and~Ling~Shi 
	}
	
	
	\maketitle

	\begin{abstract}
		\textcolor{black}{In this paper, we consider using simultaneous wireless information and power transfer (SWIPT) to recharge the sensor in the LQG control, which provides a new approach to prolonging the network lifetime. 
			We analyze the stability of the proposed system model and show that there exist two critical values for the power splitting ratio $\alpha$. 
			Then, we propose an optimization problem to derive the optimal value of $\alpha$. This problem is non-convex but its numerical solution can be derived by our proposed algorithm efficiently. Moreover, we provide the feasible condition of the proposed optimization problem.
			Finally, simulation results are presented to verify and illustrate the main theoretical results.}
	\end{abstract}
	%
	\begin{IEEEkeywords}
		\textcolor{black}{Networked control systems, SWIPT, packet drop, stability.}
	\end{IEEEkeywords}
	
	\IEEEpeerreviewmaketitle
	
\section{Introduction}
	Sensing and actuation are essential factors for the control of complex dynamical networks. 
	As one of the most fundamental optimal controller in control theory, the linear-quadratic-Gaussian (LQG) controller, which is a combination of a Kalman filter with a linear-quadratic regulator (LQR), has been studied for decades~\cite{nilsson1998stochastic, anderson2012optimal,  schenato2007foundations, gupta2007optimal, xu2022channel}. With the extensive use of wireless devices (e.g., sensors, actuators, and remote controllers), sensing signals and control signals are transmitted over wireless communication networks, where packets may be lost or delayed. The effect of packet loss on Kalman filtering was studied in the seminal paper~\cite{sinopoli2004kalman}. Later, the impact of the network reliability on control and estimation was comprehensively and systematically analyzed by~\cite{schenato2007foundations}. 

	In real applications, sensors are usually battery-powered and their battery capabilities are limited. As a result, sensor scheduling and power management are important for reducing power consumption and prolonging the lifetime of a network. 
	Many existing works have considered the sensor scheduling and power control problems for remote state estimation with Kalman filter~\cite{shi2011sensor, han2013online, ren2013optimal, wu2012event, han2015stochastic, li2013optimal, li2014multi, wu2015data, ding2017multi, ding2020dynamic, huang2021joint}. 
	The sensor scheduling with limited communication energy was investigated by~\cite{shi2011sensor, han2013online, ren2013optimal}. A deterministic event-based scheduling mechanism to achieve the trade-off between communication rate and estimation quality was proposed in~\cite{wu2012event}, and it was extended to a stochastic event-triggering mechanism in~\cite{han2015stochastic}. The power control problems under more practical communication models were studied by~\cite{li2013optimal, li2014multi, wu2015data, ding2017multi, ding2020dynamic}. To further increase the life span of networks, many researchers considered employing energy-harvesting sensors~\cite{nayyar2013optimal, nourian2014optimal, li2014transmission}. As a result of the development of radio frequency (RF) energy harvesting circuit design, wireless sensors equipped with an RF energy receiver have the ability to harvest energy from the RF signals transmitted by some wired-supplied devices, which may have unlimited energy supply but have limited transmission power. Different from the energy harvested from the external environments (e.g., solar energy, wind energy, body heat, etc.), the energy harvested from RF signal is more predictable and controllable, since the transmission power of the energy supplier can be controlled and adjusted. 
	\textcolor{black}{In the era of the fifth generation (5G) of wireless communication, there is an increasing demand for a technology that can transfer both information and power simultaneously to the end-devices~\cite{8214104}. As a result, the concept of simultaneous wireless information and power transfer (SWIPT) was introduced in~\cite{4595260}. In recent years, SWIPT has aroused great interest in wireless communication networks~\cite{8306833,8444683,8951078,9070210,9080059,9398230}, and has also been considered in networked control systems~\cite{yang2022joint}.} 
	In~\cite{yang2022joint}, a novel power control problem for remote state estimation was formulated, where the signal transmitted by the remote estimator served as the energy source of the sensor, and the transmission power allocations of the remote estimator and the sensor are jointly optimized. In this paper, we consider applying SWIPT to the LQG control. The control signals are transmitted by a transmitter, which has maximum transmission power. One part of the transmitted power will be used to decode the expected control input, and the other part of the transmitted power will be harvested by the sensor for the subsequent transmission of its measurements.
	
	The contributions of this paper are multi-fold. First, this is the first work that considers using SWIPT to recharge the sensor in the LQG control, which provides a new approach to prolonging the network lifetime. Second, we show that there exist two critical values for the power splitting ratio $\alpha$, and the cost of the infinite horizon LQG control is bounded if and only if $\alpha$ is between these two critical values. Third, we propose an optimization problem to derive the optimal value of $\alpha$. This problem is non-convex but its numerical solution can be derived by our proposed algorithm efficiently. 
	
	\textcolor{black}{
		The remainder of this paper is organized as follows.
		Section \uppercase\expandafter{\romannumeral2} presents the system model and the communication model. Section \uppercase\expandafter{\romannumeral3} introduces necessary preliminaries and provides the main theoretical results. Section \uppercase\expandafter{\romannumeral4} presents the simulation results to verify the main theorems in Section \uppercase\expandafter{\romannumeral3}. Section \uppercase\expandafter{\romannumeral5} concludes this paper and presents some future work.}
	
	\textcolor{black}{
		\emph{Notations:} 
		$\mathbb{R}$ is the set of real numbers, $\mathbb{R}^n$ is the $n$-dimensional Euclidean space, and $\mathbb{R}^{n\times m}$ is the set of real matrices with size $n\times m$. 
		For a matrix $X$, $X > 0$ $(X\geq 0)$ denotes $X$ is a positive definite (positive semidefinite) matrix, $\lambda_i^u(X)$ denotes the unstable eigenvalues of $X$, and $\text{Tr}(X)$ denotes the trace of $X$. 
		$\mathbb{E}[\cdot]$ is the expectation of a random variable. $\mathbb{P}(\cdot | \cdot)$ refers to conditional probability.
	}

	\section{Problem Setup}
	\subsection{System Model}
	\begin{figure}[h]
		\centering
		\includegraphics[width=3.4in]{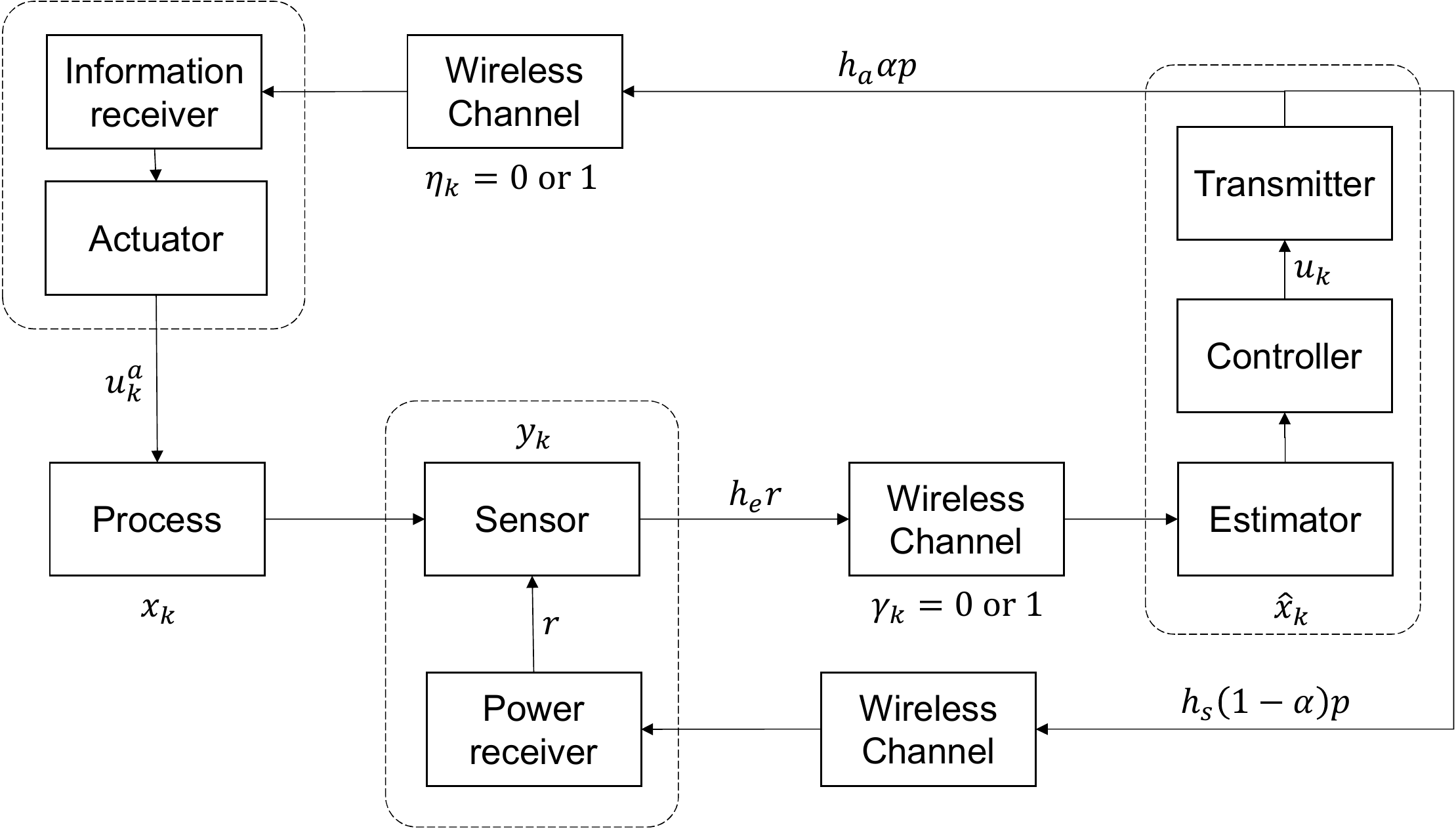}
		\caption{System Model}
		\label{system}
	\end{figure}
	Consider the following discrete-time LTI system:
	\begin{equation}
		x_{k+1} = A x_k + B u^a_k + w_k,
	\end{equation}
	where $A\in \mathbb{R}^{n\times n}$ and $B\in\mathbb{R}^{n\times q}$ are constant matrices, $x_k \in \mathbb{R}^n$ is the system state, $u^a_k\in\mathbb{R}^q$ is the actual control input exerted to the system by the actuator, 
	$w_k \in \mathbb{R}^n$ is a zero mean Gaussian noise with covariance $Q\geq 0$. Moreover, we assume $x_0$ is Gaussian with mean $\bar{x}_0$ and covariance $P_0$.
	
	The controller and the estimator are colocated with a transmitter, which transmits control signals to the actuator and transfers power to the sensor simultaneously.
	The control signal $u_k\in\mathbb{R}^q$ is sent over a lossy channel. A sequence of independent and identically distributed (i.i.d.) Bernoulli random variables, i.e., $\{\eta_k\}$, is used to model the control packet loss. If the control signal is successfully received by the actuator at time $k$, then $\eta_k=1$; otherwise, $\eta_k=0$. Then we have 
	\begin{equation}
		u^a_k = \eta_k u_k.
	\end{equation}
	
	The sensor can obtain the measurement of the system, i.e., 
	\begin{equation}
		y_k = C x_k + v_k,
	\end{equation}
	where $C\in\mathbb{R}^{p\times n}$ is a constant matrix, $y_k \in \mathbb{R}^p$ is the sensor measurement, and $v_k\in \mathbb{R}^p$ is a zero mean Gaussian noise with covariance $R > 0$. 
	It sends $y_k$ to the estimator over a lossy channel, which can be modeled by a sequence of i.i.d. Bernoulli random variables $\{\gamma_k\}$. If $y_k$ is received successfully by the estimator, then $\gamma_k = 1$; otherwise, $\gamma_k = 0$.
	
	In this paper, we consider that the transmission control protocol (TCP) is adopted by the system, which means the senders can know whether the packet delivery is successful or unsuccessful within the same sampling time period. Denote the information set available for the estimator at time $k$ by $\mathcal{L}_k = \{\gamma_1 y_1, \ldots, \gamma_k y_k, \gamma_1, \ldots, \gamma_k, \eta_1, \ldots, \eta_{k-1}\}$. The estimator needs to determine the optimal state estimation $\hat{x}_k$ based on $\mathcal{L}_k$, and the controller needs to calculate the optimal control input $u_k$ based on $\hat{x}_k$. 
	
	\subsection{Simultaneous Wireless Information and\\ Power Transfer}
	By simultaneous wireless information and power transfer (SWIPT), the transmitter at the estimator/controller side can transmit the control signal to the actuator and transmit power to the sensor at the same time. The transmitter uses a constant transmission power $p$.
	The portion of the power used for transmitting the control signal is denoted by $\alpha p$, and the portion for recharging the sensor is $(1-\alpha)p$. The actuator will decode the received control signal and the sensor will use all the harvested power to transmit its local estimate at each time.
	
	The probability of successful control signal reception for the actuator is given by
	\begin{equation}\label{}
		\mathbb{P}(\eta_{k}=1|h_a, \alpha, p)\triangleq \eta(h_a\alpha p),
	\end{equation}
	where $h_a$ is the channel fading of the channel from the transmitter to the actuator, and $\eta(\cdot): \left[ \right. 0, \infty \left)\right.\rightarrow [0,1]$ is a monotonically increasing continuous function, which is decided by the particular digital modulation mode and the channel state from the controller to the actuator. For simplicity, we just use $\eta$ to denote this probability when the power $p$ and the ratio $\alpha$ are fixed.
	
	The harvested energy $r$ can be characterized by a
	function of the received power $h_s(1-\alpha)p$, i.e.,
	\begin{equation}
		r = \psi(h_s((1-\alpha)p)),
	\end{equation}
	where $h_s$ is the channel fading of the channel from the transmitter to the sensor, and $\psi$ is a nondecreasing function determined by the energy harvesting circuit.
	Similarly, the probability of successful reception of the sensor's local estimate for the estimator is given by
	\begin{equation}\label{}
		\mathbb{P}(\gamma_{k}=1|h_e,r)\triangleq \gamma(h_e r),
	\end{equation}
	where $h_e$ is the channel fading of the channel from the sensor to the estimator, and $\gamma(\cdot): \left[ \right. 0, \infty \left)\right.\rightarrow [0,1]$ is also a monotonically increasing continuous function decided by the particular digital modulation mode and the channel state from the sensor to the estimator, and we use $\gamma$ to denote this probability for simplicity. 
	
	\section{LQG Control with SWIPT}\label{sec:finite_lqg}
	\subsection{Prelimimaries of LQG control}
	\textcolor{black}{In this subsection, we present some useful lemmas, which are necessary for the derivation of the main results.}
	
	We first define the following variables:
	\textcolor{black}{
		\begin{align}
			\hat{x}_{k}&\triangleq\mathbb{E}[x_k|
			\mathcal{L}_k],\\
			P_{k}&\triangleq \mathbb{E}[(x_k-\hat{x}_{k})(x_k-\hat{x}_{k})^T|\mathcal{L}_k].
		\end{align}
	}
	The optimal estimator can be derived as
	\textcolor{black}{
		\begin{align}
			\hat{x}_{k+1}^{-}&\triangleq A\mathbb{E}[x_k|\mathcal{L}_k]+\eta_{k}Bu_k = A\hat{x}_{k}+\eta_{k}Bu_k,\\
			P_{k+1}^{-}&\triangleq \mathbb{E}[(x_{k+1}-\hat{x}_{k+1}^{-})(x_{k+1}-\hat{x}_{k+1}^{-})^T|\mathcal{L}_k]\\ & =AP_{k}A^T+Q.
		\end{align}
	}
	
	For the TCP-like system, the optimal estimator is the following:
	\begin{align}
		\hat{x}_{k+1} &= \hat{x}_{k+1}^{-} + \gamma_{k+1}K_{k+1}(y_{k+1}-C\hat{x}_{k+1}^{-}),\\
	\begin{split}
			    P_{k+1}^{-}&\triangleq  \mathbb{E}[(x_{k+1}-\hat{x}_{k+1}^{-})(x_{k+1}-\hat{x}_{k+1}^{-})^T|\mathcal{L}_k]\\ & =AP_{k}A^T+Q.
			\end{split}	
        \end{align}
	Consider the following cost function:
	\begin{equation}
		\begin{split}
			J_N(\bar{x}_0, P_0) =& \mathbb{E}[x_N^T W_N x_N \\&+ \sum_{k=0}^{N-1}(x_k^T W_k x_k + {u^a_k}^T U_k u^a_k)| \bar{x}_0, P_0],
		\end{split}
	\end{equation}
	then, to obtain the optimal control input sequence, the following optimization problem should be solved
	\begin{equation}\label{flqg}
		\min_{u_k, k=0,1,\ldots,N-1} J_N(\bar{x}_0, P_0).
	\end{equation}
	Define the optimal value function $V_k(x_k)$ as follows:
	\begin{align}
		V_N(x_N)&\triangleq \mathbb{E}[x_N^TW_N x_N|\mathcal{L}_N],\\
		V_k(x_k)&\triangleq \min_{u_k}\mathbb{E}[x_k^TW_k x_k+\eta_ku_k^TU_ku_k+V_{k+1}(x_{k+1})|\mathcal{L}_k],\label{value_function}
	\end{align}
	where $k=1,2,\ldots, N-1$. Using dynamic programming, it can be shown that $J_N^*\triangleq \min_{u_k, k=0,1,\ldots,N-1} J_N(\bar{x}_0, P_0) = V_0(x_0)$. For the TCP-like system, the following lemma holds:
	\begin{lemma}\cite[Lemma 5.1]{schenato2007foundations}
		The value function $V_k(x_k)$ in (\ref{value_function}) has the following form:
		\begin{equation}
			V_k(x_k)=\mathbb{E}[x_k^TS_k x_k|\mathcal{L}_k]+c_k, k=0,1,\ldots,N,
		\end{equation}
		where
		\begin{align}
			\begin{split}
				S_k=&A^TS_{k+1}A+W_k\\&-\eta A^TS_{k+1}B(B^TS_{k+1}B+U_k)^{-1}B^TS_{k+1}A,
			\end{split}	\\
			\begin{split}
				c_k =& \mathrm{Tr}\left((A^TS_{k+1}A+W_k-S_k)P_{k}\right)+\mathrm{Tr}\left(S_{k+1} Q\right)\\&+\mathbb{E}[c_{k+1}|\mathcal{L}_{k}],
			\end{split}
		\end{align}
		with initial values $S_N=W_N$ and $c_N=0$, and the optimal control input is given by
		\begin{equation}
			u_k=-(B^TS_{k+1}B+U_k)^{-1}B^TS_{k+1}A\hat{x}_{k}=L_k \hat{x}_{k}.
		\end{equation}
	\end{lemma}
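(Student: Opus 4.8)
The plan is to establish the claimed form by backward induction on $k$, running from the terminal time $N$ down to $0$ along the dynamic programming recursion (\ref{value_function}). The base case $k=N$ is immediate: the definition $V_N(x_N)=\mathbb{E}[x_N^TW_Nx_N|\mathcal{L}_N]$ matches the asserted form once we take $S_N=W_N$ and $c_N=0$. For the inductive step I would assume $V_{k+1}(x_{k+1})=\mathbb{E}[x_{k+1}^TS_{k+1}x_{k+1}|\mathcal{L}_{k+1}]+c_{k+1}$ and substitute it into (\ref{value_function}).

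The first tool is the tower property: since $\mathcal{L}_k\subseteq\mathcal{L}_{k+1}$, the nested conditioning collapses to $\mathbb{E}[V_{k+1}(x_{k+1})|\mathcal{L}_k]=\mathbb{E}[x_{k+1}^TS_{k+1}x_{k+1}|\mathcal{L}_k]+\mathbb{E}[c_{k+1}|\mathcal{L}_k]$. I would then substitute the closed-loop dynamics $x_{k+1}=Ax_k+\eta_kBu_k+w_k$ and expand the quadratic. Because $w_k$ is zero-mean and independent of $\mathcal{L}_k$, all cross terms in $w_k$ vanish and $\mathbb{E}[w_k^TS_{k+1}w_k]=\mathrm{Tr}(S_{k+1}Q)$; because $\eta_k$ is Bernoulli and independent of $\mathcal{L}_k$ (note that $\eta_k\notin\mathcal{L}_k$), I would use $\mathbb{E}[\eta_k]=\mathbb{E}[\eta_k^2]=\eta$ to replace both the linear and the quadratic occurrences of $\eta_k$ by the single scalar $\eta$.

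After this expansion the objective is quadratic in the decision variable $u_k$. Since $u_k$ must be $\mathcal{L}_k$-measurable, it is deterministic under the conditioning, so its contribution reduces to $\eta u_k^T(U_k+B^TS_{k+1}B)u_k+2\eta\hat{x}_k^TA^TS_{k+1}Bu_k$, where the filtered state $\hat{x}_k=\mathbb{E}[x_k|\mathcal{L}_k]$ emerges from the cross term once the independent $\eta_k$ is averaged out. Assuming $U_k>0$, so that $U_k+B^TS_{k+1}B>0$ whenever $S_{k+1}\ge0$ (a property the induction simultaneously maintains), this is strictly convex; setting the gradient to zero yields the stated $u_k=-(B^TS_{k+1}B+U_k)^{-1}B^TS_{k+1}A\hat{x}_k=L_k\hat{x}_k$.

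Finally I would reassemble $V_k$. Completing the square contributes $-\eta\hat{x}_k^TA^TS_{k+1}B(B^TS_{k+1}B+U_k)^{-1}B^TS_{k+1}A\hat{x}_k$, so the quadratic terms collect into $\hat{x}_k^TS_k\hat{x}_k$ with $S_k$ exactly as claimed. The remaining step, which I expect to be the main bookkeeping obstacle, is to convert every quadratic form in $\hat{x}_k$ back into one in $x_k$: applying the identity $\mathbb{E}[x_k^TMx_k|\mathcal{L}_k]=\hat{x}_k^TM\hat{x}_k+\mathrm{Tr}(MP_k)$ for $M\in\{W_k,\,A^TS_{k+1}A,\,S_k\}$, the term $\hat{x}_k^TS_k\hat{x}_k$ becomes $\mathbb{E}[x_k^TS_kx_k|\mathcal{L}_k]-\mathrm{Tr}(S_kP_k)$, and the leftover trace terms, together with $\mathrm{Tr}(S_{k+1}Q)$ and $\mathbb{E}[c_{k+1}|\mathcal{L}_k]$, combine into precisely the stated $c_k$. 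Carefully separating the filtered state from the estimation error through $P_k$—that is, tracking which quadratics live in $x_k$ versus $\hat{x}_k$—is the delicate part of the argument; the rest is routine algebra that closes the induction.
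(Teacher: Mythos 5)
Your proposal is correct: this lemma is not proved in the paper at all but simply cited from \cite[Lemma 5.1]{schenato2007foundations}, and your backward-induction argument---tower property, expansion of $x_{k+1}=Ax_k+\eta_k Bu_k+w_k$ with $\mathbb{E}[\eta_k]=\mathbb{E}[\eta_k^2]=\eta$, completion of squares in $u_k$, and the identity $\mathbb{E}[x_k^TMx_k|\mathcal{L}_k]=\hat{x}_k^TM\hat{x}_k+\mathrm{Tr}(MP_k)$ to convert the $\hat{x}_k$-quadratics into the stated $S_k$ and $c_k$ recursions---is precisely the standard dynamic-programming proof given in that reference. No gaps; this matches the source argument essentially step for step.
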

	Since $J_N^*(\bar{x}_0,P_0)=V_0(x_0)$, we have 
	\begin{equation}\label{cost}
		\begin{split}
			J_N^*(\bar{x}_0,P_0) = &\bar{x}_0^TS_0 \bar{x}_0 + \mathrm{Tr}\left(S_0 P_0\right)+\sum_{k=0}^{N-1}\mathrm{Tr}\left(S_{k+1}Q\right)\\&+\sum_{k=0}^{N-1}\mathrm{Tr}\left((A^TS_{k+1}A+W_k-S_k)\mathbb{E}_\gamma[P_{k}]\right),
		\end{split}
	\end{equation}
	where $\mathbb{E}_\gamma[\cdot]$ explicitly indicates that the expectation is calculated with respect to the arrival sequence $\{\gamma_k\}$.
	
	The infinite horizon LQG can be obtained by taking the limit for $N\rightarrow +\infty$ of the previous equations in Section \ref{sec:finite_lqg}. 
	
	Define the Modified Riccati Algebraic Equation (MARE) as 
	\begin{equation}\label{MARE}
		\begin{split}
			S=\Pi(S,A,B,W,U,\eta),
		\end{split}
	\end{equation}
	where $\Pi(S,A,B,W,U,\eta)\triangleq A^TSA+W -\eta A^TSB(B^TSB+U)^{-1}B^TSA$. Then, we have the following lemmas.
	
	\begin{lemma}\cite[Lemma 5.4]{schenato2007foundations}\label{lm1}
		Consider the modified Riccati equation defined in (\ref{MARE}). Let $A$ be unstable, $(A,B)$ be controllable, and $(A,W^{1/2})$ be observable. Then, the MARE has a unique strictly positive definite solution $S_\infty$ if and only if $\eta>\eta_c$, where $\eta_c$ is the critical arrival probability defined as
			\begin{equation}\label{def:eta}
				\eta_c\triangleq\inf_\eta\{0<\eta<1|S=\Pi(S,A,B,W,U,\eta), S\geq 0\}.
			\end{equation}
	\end{lemma}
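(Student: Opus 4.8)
The plan is to extract two structural properties of the operator $\Pi$ and then run a monotone fixed-point argument, paralleling the treatment of the dual Kalman recursion. The starting point is the variational representation obtained by completing the square in a feedback gain $L$:
\begin{equation*}
\Pi(S,\eta)=\min_{L}\bigl[(1-\eta)A^{T}SA+\eta(A+BL)^{T}S(A+BL)+\eta L^{T}UL+W\bigr].
\end{equation*}
One checks that the minimizer is $L=-(B^{T}SB+U)^{-1}B^{T}SA$ and that the minimized value is exactly $\Pi(S,\eta)$. For each fixed $L$ the bracketed map is affine in $S$ and, since $(1-\eta)A^{T}SA$ and $\eta(A+BL)^{T}S(A+BL)$ are order-preserving on the positive-semidefinite cone for $\eta\in[0,1]$, it is monotone. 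Hence $\Pi(\cdot,\eta)$ is monotone (an infimum of monotone maps) and concave (a pointwise infimum of affine maps). I would also record monotonicity in $\eta$: because the subtracted term is $\eta$ times a positive-semidefinite matrix, $\eta_{1}\le\eta_{2}$ implies $\Pi(S,\eta_{1})\ge\Pi(S,\eta_{2})$. As a sanity check and a possible shortcut, under $A\mapsto A^{T}$, $B\mapsto C^{T}$, $W\mapsto Q$, $U\mapsto R$ the equation becomes the estimation modified Riccati equation and the hypotheses map onto observability of $(A,C)$ and controllability of $(A,Q^{1/2})$, so the statement is the algebraic dual of the Kalman-filtering result and could be invoked directly.

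With these properties I would first verify that $\eta_{c}$ is well defined and that a solution exists above it. Controllability of $(A,B)$ ensures the ordinary ($\eta=1$) Riccati equation has a stabilizing positive-semidefinite solution, so the feasibility set $F=\{\eta:\exists\,S\ge0,\ S=\Pi(S,\eta)\}$ is nonempty. If $S\ge0$ solves the MARE at some $\eta$ and $\eta'\ge\eta$, then $\Pi(S,\eta')\le\Pi(S,\eta)=S$, and iterating $\Pi(\cdot,\eta')$ from $0$ (which starts below $S$ and stays below it by monotonicity) produces a solution at $\eta'$; hence $F$ is an up-set and for every $\eta>\eta_{c}=\inf F$ there is a positive-semidefinite solution $\bar S$. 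Fixing such an $\eta$, the iteration $S_{k+1}=\Pi(S_{k},\eta)$ from $S_{0}=0$ satisfies $S_{1}=W\ge S_{0}$, is therefore nondecreasing, and is bounded above by $\bar S$, so $S_{k}\uparrow S_{\infty}$ with $S_{\infty}=\Pi(S_{\infty},\eta)$. Observability of $(A,W^{1/2})$ then upgrades $S_{\infty}\ge0$ to $S_{\infty}>0$.

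The ``only if'' direction for $\eta<\eta_{c}$ is immediate from the definition of $\eta_{c}$ (no positive-semidefinite, hence no positive-definite, solution exists); the boundary $\eta=\eta_{c}$ I would dispatch by a limiting argument, noting that the strict contraction needed to sustain a strictly positive-definite solution degenerates at the critical value. The genuinely delicate part is \emph{uniqueness}, and this is where concavity does real work. The argument I would pursue linearizes $\Pi$ at the fixed point: by concavity, for any $S$ one has $\Pi(S,\eta)\le\Pi(S_{\infty},\eta)+\mathcal{F}(S-S_{\infty})$, where $\mathcal{F}(X)=(1-\eta)A^{T}XA+\eta(A+BL^{*})^{T}X(A+BL^{*})$ is the closed-loop operator at the optimal gain $L^{*}$ of $S_{\infty}$. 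If $\bar S$ is any other solution, then $0\le\bar S-S_{\infty}=\Pi(\bar S,\eta)-\Pi(S_{\infty},\eta)\le\mathcal{F}(\bar S-S_{\infty})$, and applying the monotone $\mathcal{F}$ repeatedly gives $0\le\bar S-S_{\infty}\le\mathcal{F}^{m}(\bar S-S_{\infty})$ for all $m$. The crux is therefore to show that for $\eta>\eta_{c}$ the operator $\mathcal{F}$ is an order-contraction (its iterates drive every positive-semidefinite matrix to zero), which forces $\bar S=S_{\infty}$; establishing this contraction — equivalently, that the averaged closed loop associated with $S_{\infty}$ is stable precisely when $\eta>\eta_{c}$ — is the main obstacle and the step I would spend the most effort on.
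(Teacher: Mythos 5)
The first thing to observe is that the paper contains no proof of this lemma at all: it is imported by citation from Schenato et al., where it is in turn obtained by dualizing the modified-Riccati analysis of Sinopoli et al.\ for Kalman filtering with intermittent observations, under the substitution $A\mapsto A^{T}$, $B\mapsto C^{T}$, $W\mapsto Q$, $U\mapsto R$. Your ``sanity check'' paragraph identifies exactly this duality, so the route the paper actually takes is available to you in one line. Your self-contained attempt is the standard argument from that literature, and most of its steps are sound: the variational representation of $\Pi$ (your minimizer $L=-(B^{T}SB+U)^{-1}B^{T}SA$ and the minimized value are correct), the resulting monotonicity and concavity of $\Pi(\cdot,\eta)$, the monotonicity in $\eta$, the up-set structure of the feasible set, the monotone iteration from $S_{0}=0$ bounded above by any existing positive-semidefinite solution, the minimality of its limit $S_{\infty}$ (which is what licenses your later, unexplained claim $\bar S-S_{\infty}\ge 0$ against an arbitrary solution $\bar S$, and should be stated explicitly), and strict positive definiteness of $S_{\infty}$ from observability of $(A,W^{1/2})$.

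Nevertheless, the proposal has a genuine gap, which you flag but do not close: mean-square stability of the averaged closed-loop operator $\mathcal{F}(X)=(1-\eta)A^{T}XA+\eta(A+BL^{*})^{T}X(A+BL^{*})$, i.e.\ that $\mathcal{F}^{m}(X)\to 0$ for every $X\ge 0$. Without it, the chain $0\le\bar S-S_{\infty}\le\mathcal{F}^{m}(\bar S-S_{\infty})$ yields nothing, so uniqueness --- the part of the statement you yourself call delicate --- is simply not proved. Note that you do not need the sharp characterization ``$\mathcal{F}$ is stable precisely when $\eta>\eta_{c}$'' that you set up as the obstacle; it suffices to prove stability of $\mathcal{F}$ at the constructed fixed point, and this follows from the Lyapunov-type identity obtained by evaluating your variational form at $L^{*}$, namely $S_{\infty}=\mathcal{F}(S_{\infty})+W+\eta (L^{*})^{T}UL^{*}$: the strictly positive definite $S_{\infty}$ serves as a Lyapunov certificate once observability of $(A,W^{1/2})$ is used a second time to show that the forcing term accumulates to a positive definite matrix under iterates of $\mathcal{F}$ (e.g.\ via $\sum_{j<n}\mathcal{F}^{j}(W)\ge(1-\eta)^{n-1}\sum_{j<n}(A^{T})^{j}WA^{j}>0$), which rules out a unit-or-larger Perron eigenvalue. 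This is precisely the content of the key lemma in Sinopoli et al.\ that your outline omits. Separately, your treatment of the boundary case $\eta=\eta_{c}$ (``the contraction degenerates at the critical value'') is not an argument: your up-set reasoning only shows the feasible set is an interval with left endpoint $\eta_{c}$, and excluding a unique strictly positive definite solution at $\eta_{c}$ itself requires real work, for instance showing that a fixed point with stable linearization at $\eta_{c}$ could be continued to $\eta<\eta_{c}$ by the implicit function theorem, contradicting the definition of $\eta_{c}$. As it stands, the proposal is an accurate roadmap of the known proof with its two hardest steps left open.
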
 
	
	\begin{lemma}\cite[Theorem 5.5]{schenato2007foundations}\label{lm2}
		Assume that $(A,Q^{1/2})$ is controllable, $(A,C)$ is observable, and $A$ is unstable. Then there exists a critical observation arrival probability $\gamma_c$, such that the expectation of estimator error covariance is bounded if and only if the observation arrival probability is greater than the critical arrival probability, i.e., 
		\begin{equation}\label{def:gamma}
			\mathbb{E}_\gamma[P_{k}]\leq M, \forall k \iff \gamma>\gamma_c,
		\end{equation}
		where $M$ is a positive definite matrix possibly dependent on $P_0$. Moreover, it is possible to compute a lower and an upper bound for the critical observation arrival probability $\gamma_c$, i.e., $p_\mathrm{min}\leq \gamma_c\leq\gamma_\mathrm{max}\leq p_\mathrm{max}$, where
		\begin{align*}
			p_\text{min}\triangleq&1-\frac{1}{\max_i|\lambda_i^u(A)|^2}, \quad
			p_\text{max}\triangleq 1-\frac{1}{\prod_i|\lambda_i^u(A)|^2},\\
			\gamma_\mathrm{max}\triangleq& \inf_\gamma\{0\leq \gamma\leq 1| P=\Pi(P,A^T,C^T,Q,R,\gamma), P\geq 0\}.
		\end{align*}
	\end{lemma}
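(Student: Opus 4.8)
The plan is to study the mean of the random Kalman error-covariance recursion
\[
P_{k+1}=AP_kA^T+Q-\gamma_{k+1}AP_kC^T(CP_kC^T+R)^{-1}CP_kA^T,
\]
and to compare it with the deterministic modified Riccati map $\phi_\gamma(P)\triangleq\Pi(P,A^T,C^T,Q,R,\gamma)$. Taking the conditional expectation over the Bernoulli variable $\gamma_{k+1}$ gives $\mathbb{E}[P_{k+1}\mid P_k]=\phi_\gamma(P_k)$. First I would record three structural properties of $\phi_\gamma$ on the cone $\{P\geq0\}$: it is monotone nondecreasing in $P$; it is concave in $P$, which follows from writing $\phi_\gamma=(1-\gamma)(A(\cdot)A^T+Q)+\gamma\,\Pi(\cdot,A^T,C^T,Q,R,1)$ and realizing the standard Riccati map as a pointwise infimum over observer gains of maps affine in $P$; and it is nonincreasing in $\gamma$, since the subtracted correction term is nonnegative and scales with $\gamma$.

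Next I would set $V_k\triangleq\mathbb{E}_\gamma[P_k]$. By concavity and Jensen's inequality, $V_{k+1}=\mathbb{E}[\phi_\gamma(P_k)]\leq\phi_\gamma(V_k)$, and monotonicity then gives $V_k\leq\phi_\gamma^{k}(P_0)$ by induction. If the MARE $P=\phi_\gamma(P)$ has a positive semidefinite solution $\bar P$, a monotone-convergence argument for the modified Riccati operator — the estimation dual of the structure underlying Lemma~\ref{lm1}, valid because $(A,C)$ is observable and $(A,Q^{1/2})$ is controllable — shows that $\phi_\gamma^{k}(P_0)$ remains bounded, whence $V_k\leq M$ for all $k$. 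This proves $\gamma_c\leq\gamma_{\max}$. Finally, the third property shows boundedness is inherited by every larger arrival probability, so $\{\gamma:\sup_k V_k<\infty\}$ is an up-interval; defining $\gamma_c$ as its infimum yields the stated equivalence $\mathbb{E}_\gamma[P_k]\leq M\ \forall k\iff\gamma>\gamma_c$.

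To obtain the lower bound $p_{\min}\leq\gamma_c$ I must exhibit divergence of $V_k$ when $\gamma<p_{\min}$, and here Jensen is useless because it controls the mean only from above. Instead I would project onto the dominant unstable mode: transform $A$ into a basis that isolates its unstable part, let $\lambda_{\max}$ be the unstable eigenvalue of largest modulus, and build a scalar Markov-modulated linear recursion that lower-bounds the error variance along the corresponding eigendirection. In a no-arrival step (probability $1-\gamma$) this variance is amplified by $|\lambda_{\max}|^2$, while the contraction produced by an arrival is uniformly bounded; taking expectations, the mean of this scalar lower bound grows geometrically whenever $(1-\gamma)|\lambda_{\max}|^2>1$, i.e. whenever $\gamma<1-1/\max_i|\lambda_i^u(A)|^2=p_{\min}$. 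Hence $V_k$ is unbounded for such $\gamma$, giving $p_{\min}\leq\gamma_c$.

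It remains to show $\gamma_{\max}\leq p_{\max}$, which I would do by producing a positive semidefinite solution of the MARE whenever $\gamma>p_{\max}$. Since $\prod_i|\lambda_i^u(A)|^2=|\det A_u|^2$ for the unstable block $A_u$, the threshold $p_{\max}$ encodes the total one-step volume expansion of the unstable dynamics; a determinant inequality shows that once the arrival probability exceeds the reciprocal of this factor, $\phi_\gamma$ maps a suitable bounded order interval into itself, so a monotone fixed-point argument produces a solution of $P=\phi_\gamma(P)$ and thus $\gamma_{\max}\leq p_{\max}$. Chaining the three bounds gives $p_{\min}\leq\gamma_c\leq\gamma_{\max}\leq p_{\max}$. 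I expect the divergence step of the previous paragraph to be the main obstacle: the eigendirection reduction has to survive the randomness of the correction term, whose magnitude depends on the random past covariance, so the scalar comparison must be arranged so that the amplification factor $|\lambda_{\max}|^2$ is retained while the correction stays bounded away from dominating it; the determinant estimate for $p_{\max}$ is the second delicate point.
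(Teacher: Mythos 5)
First, a framing point: the paper does not prove this lemma at all --- it is quoted verbatim (with its bounds) from Theorem~5.5 of \cite{schenato2007foundations}, which in turn rests on Sinopoli et al.'s work on Kalman filtering with intermittent observations; the paper's ``proof'' is the citation. So your attempt must be judged against the source literature's argument, and for most of the statement you have reconstructed exactly that argument. The boundedness direction is the classical one: $\mathbb{E}[P_{k+1}\mid P_k]=\phi_\gamma(P_k)$, concavity of $\phi_\gamma$ (via the representation of the Riccati map as a pointwise minimum over gains of maps affine in $P$) plus Jensen gives $V_{k+1}\leq\phi_\gamma(V_k)$, and monotone convergence of the deterministic iteration when the MARE is solvable yields $\gamma_c\leq\gamma_{\max}$. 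Your divergence step for $p_{\min}\leq\gamma_c$ is also the standard one, and it is easier than you fear: you never need to keep the correction term under control, because pointwise $P_{k+1}\geq(1-\gamma_{k+1})(AP_kA^T+Q)$ (on arrival steps simply bound $P_{k+1}\geq 0$). Taking expectations and left/right multiplying by a left eigenvector $x$ of the dominant unstable eigenvalue gives $x^*\mathbb{E}[P_k]x\geq\sum_{j<k}\bigl((1-\gamma)|\lambda_{\max}|^2\bigr)^j(1-\gamma)\,x^*Qx$, which diverges for $\gamma<p_{\min}$; controllability of $(A,Q^{1/2})$ is what guarantees $x^*Qx>0$. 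Your worry about the ``randomness of the correction term'' dissolves once you discard it entirely on arrival steps.

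There are, however, two genuine gaps. (i) ``Boundedness is inherited by every larger arrival probability'' does not follow from anti-monotonicity of the \emph{averaged} operator $\phi_\gamma$ alone: Jensen only gives $V_k\leq\phi_{\gamma}^{k}(P_0)$, and boundedness of $V_k$ at $\gamma_1$ tells you nothing about $\phi_{\gamma_1}^k(P_0)$, so you cannot chain operator inequalities. The correct (standard) fix is a pathwise coupling: realize the two Bernoulli sequences on one probability space with $\gamma_k^{(2)}\geq\gamma_k^{(1)}$ a.s., and use monotonicity in $P$ together with anti-monotonicity in the realized $\gamma_k$ of the \emph{random} update to get $P_k^{(2)}\leq P_k^{(1)}$ a.s.\ by induction. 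Relatedly, the stated ``iff'' requires unboundedness \emph{at} $\gamma=\gamma_c$, which your infimum definition leaves undecided. (ii) More seriously, the bound $\gamma_{\max}\leq p_{\max}$ is not established by your sketch. ``A determinant inequality shows that $\phi_\gamma$ maps a suitable bounded order interval into itself'' is not an argument: the determinant does not interact with the semidefinite order in the way you need (the measurement update can shrink determinants arbitrarily), and producing a self-mapped order interval, i.e.\ a supersolution of the MARE, for every $\gamma>1-1/\prod_i|\lambda_i^u(A)|^2$ is precisely the hard content of this part of the theorem. In the literature this is a substantive result, obtained through the LMI characterization of MARE solvability and dedicated constructions (and it is tight: for degenerate systems such as $A=\mathrm{diag}(\lambda_1,\lambda_2)$, $C=[1\ \ 1]$, the critical value equals $1-1/(\lambda_1\lambda_2)^2$), so as it stands this piece of the statement remains unproved in your proposal.
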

	
	\subsection{Properties of optimal control}
	\textcolor{black}{As introduced in the last subsection, there exists a critical value for the probability of successful control signal reception, i.e., $\eta_c$ defined in~(\ref{def:eta}), and a critical value for the probability of successful reception of the sensor's local estimate, i.e., $\gamma_c$ defined in~(\ref{def:gamma}).} 
	In this subsection, we will show there exist two critical values, i.e., the left critical value $\underline{\alpha}$ and the right critical value $\overline{\alpha}$, for the power splitting ratio $\alpha$. \textcolor{black}{The existence of the two critical values for $\alpha$ results from that $\alpha$ realizes the trade-off between control and estimation.} 
	The infinite-horizon average cost of the LQG control will be unbounded if $\alpha$ is smaller than $\underline{\alpha}$ or larger than $\overline{\alpha}$. Moreover, the convergence property of the cost under the optimal control law is analyzed.

 Define the following functions: 
		\begin{align}
			\tilde{h}(X)\triangleq& AXA^T+Q,\\
			\hat{h}(X)\triangleq& A^TXA+W,\\
			\tilde{g}(\alpha, X)\triangleq& X^{-1}-\gamma(r) AXC^T(CXC^T+R)^{-1}CXA^T,\\
			\hat{g}(\alpha, X)\triangleq& X^{-1}-\eta(\alpha) A^TXB(B^TXB+U)^{-1}B^TXA.
		\end{align}
		For the sake of simplicity, we use $\tilde{g}_\alpha(X)$ and $\hat{g}_\alpha(X)$ to denote $\tilde{g}(\alpha, X)$ and $\hat{g}(\alpha,X)$, respectively. Also, define
		\begin{align}
			S(\alpha)=&\lim\limits_{k\rightarrow+\infty}(\hat{h}\circ\hat{g}_{\alpha})^k(S_N),\\
			\overline{P}(\alpha)=&\lim\limits_{k\rightarrow+\infty}(\tilde{g}_{\alpha}\circ\tilde{h})^k(P_0),
		\end{align}
  where $S_N=W$. 
  Then, we have the following lemma.
	\begin{lemma}
		If $\alpha_1\leq\alpha_2$, then $S(\alpha_1)\leq S(\alpha_2)$ and $\overline{P}(\alpha_1)\geq\overline{P}(\alpha_2)$.
	\end{lemma}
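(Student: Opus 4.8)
The plan is to prove both monotonicity statements by the same device: reduce the dependence on \(\alpha\) to a monotone dependence of the two channel parameters, and then compare the iterates of the two modified Riccati recursions term by term via a monotone-iteration argument.

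First I would translate the hypothesis \(\alpha_1\le\alpha_2\) into monotonicity of the reception probabilities. Because \(\eta(\cdot)\) is nondecreasing and \(\alpha\mapsto h_a\alpha p\) is nondecreasing, the control-reception probability \(\eta(h_a\alpha p)\) is nondecreasing in \(\alpha\); because \(\psi(\cdot)\) and \(\gamma(\cdot)\) are nondecreasing while \(\alpha\mapsto h_s(1-\alpha)p\) is nonincreasing, the estimation-reception probability \(\gamma\bigl(h_e\psi(h_s(1-\alpha)p)\bigr)\) is nonincreasing in \(\alpha\). Since \(S(\alpha)\) enters only through \(\eta\) in \(\hat g_\alpha\) and \(\overline P(\alpha)\) only through \(\gamma\) in \(\tilde g_\alpha\), it suffices to establish monotonicity of the limiting solutions in these two parameters, with opposite signs reflecting the trade-off already noted between control and estimation.

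Next I would record the two structural properties of the composed maps \(\hat h\circ\hat g_\alpha\) and \(\tilde g_\alpha\circ\tilde h\): (i) each is order preserving in its matrix argument for the Loewner order, so that \(X\le Y\) implies \((\hat h\circ\hat g_\alpha)(X)\le(\hat h\circ\hat g_\alpha)(Y)\), and likewise for \(\tilde g_\alpha\circ\tilde h\); and (ii) for fixed argument each is monotone in its channel parameter, the direction being dictated by the definitions of \(\hat g_\alpha\) and \(\tilde g_\alpha\). Property (i) is the standard monotonicity of the modified Riccati operator underlying Lemma~\ref{lm1} and Lemma~\ref{lm2}, and (ii) follows from the explicit \(\eta\)- and \(\gamma\)-dependence of the gain terms in \(\hat g_\alpha\) and \(\tilde g_\alpha\) combined with the parameter monotonicity from the first step. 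With (i) and (ii) in hand I would run the comparison by induction on \(k\): the two sequences \((\hat h\circ\hat g_{\alpha_1})^k(S_N)\) and \((\hat h\circ\hat g_{\alpha_2})^k(S_N)\) start from the common point \(S_N=W\), and the inductive step chains (i) applied to the ordered \(k\)-th iterates with (ii) applied at a fixed iterate, which preserves the ordering at step \(k+1\). Passing to the limit \(k\to\infty\) then gives \(S(\alpha_1)\le S(\alpha_2)\); the identical scheme applied to \(\tilde g_\alpha\circ\tilde h\) from \(P_0\), with the reversed parameter direction, gives \(\overline P(\alpha_1)\ge\overline P(\alpha_2)\).

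The step I expect to be the main obstacle is the careful verification of the two monotonicity properties for the composed operators: propagating the Loewner inequalities through the matrix-inverse and gain structure of \(\hat g_\alpha\) and \(\tilde g_\alpha\), and pinning down the correct sign of the parameter dependence in (ii). A secondary technical point is the existence of the limits defining \(S(\alpha)\) and \(\overline P(\alpha)\) for both \(\alpha_1\) and \(\alpha_2\); this requires the two ratios to lie in the feasible band between the critical values \(\underline{\alpha}\) and \(\overline{\alpha}\), so that convergence of the recursions is guaranteed by Lemma~\ref{lm1} and Lemma~\ref{lm2}.
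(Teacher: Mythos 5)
Your proposal is correct and follows essentially the same route as the paper's own proof: both arguments establish (a) monotonicity of the composed Riccati maps $\hat h\circ\hat g_\alpha$ and $\tilde g_\alpha\circ\tilde h$ in the matrix argument under the Loewner order and (b) monotonicity in the power-splitting parameter for a fixed argument, then chain these two facts in an induction on the iterates started from the common initial condition and pass to the limit. The only differences are presentational — you make explicit the translation of $\alpha$-monotonicity into monotonicity of $\eta$ and $\gamma$ and the need for convergence of the iterates, which the paper leaves implicit.
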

	\begin{proof}
            See Appendix~A.
	\end{proof}

	\begin{theorem}\label{thm0}
		The cost $J_N^*$ can be bounded as follows:
		\begin{equation*}
			J_N^\mathrm{min}\leq J_N^* \leq J_N^\mathrm{max},
		\end{equation*}
		where
		\begin{align}
			\begin{split}
				J_N^\mathrm{min} =& \bar{x}_0^T S_0 \bar{x}_0 + \mathrm{Tr}\left(S_0 P_0\right) + \sum_{k=0}^{N-1}\mathrm{Tr}\left(S_{k+1}Q\right)\\& +(1-\gamma(h_e\psi(h_s((1-\alpha)p))))  \\& \times \sum_{k=0}^{N-1}\mathrm{Tr}\left((A^TS_{k+1}A+W_k-S_k)\underline{P}_k\right),
			\end{split}
           \end{align}
           \begin{align}
			\begin{split}
				J_N^\mathrm{max} =& \bar{x}_0^T S_0 \bar{x}_0 + \mathrm{Tr}\left(S_0 P_0\right) + \sum_{k=0}^{N-1}\mathrm{Tr}\left(S_{k+1}Q\right) \\& +\mathrm{Tr}\left((A^TS_kA+W_k-S_k)\right.\\&\times\left(\overline{P}_k-\gamma(h_e\psi(h_s((1-\alpha)p)))\right.\\&\left.\left.\times \overline{P}_kC^T(C\overline{P}_kC^T+R)^{-1}C\overline{P}_k\right)\right),
			\end{split}				
		\end{align}
		and
		\begin{align}
			\underline{P}_k =& (1-\gamma(h_e\psi(h_s((1-\alpha)p))))A \underline{P}_{k-1}A^T+Q,\\
			\begin{split}
				\overline{P}_k =& A\overline{P}_kA^T+Q\\&-\gamma(h_e\psi(h_s((1-\alpha)p)) A\overline{P}_kC^T(C\overline{P}_kC^T+R)^{-1}\\&\times C\overline{P}_kA^T.
			\end{split}
		\end{align}
	\end{theorem}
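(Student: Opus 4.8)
The plan is to start from the closed-form cost \eqref{cost}, which expresses $J_N^*$ as a deterministic part $\bar{x}_0^TS_0\bar{x}_0+\mathrm{Tr}(S_0P_0)+\sum_{k=0}^{N-1}\mathrm{Tr}(S_{k+1}Q)$ plus the estimation-dependent part $\sum_{k=0}^{N-1}\mathrm{Tr}(\Lambda_k\,\mathbb{E}_\gamma[P_k])$ with $\Lambda_k\triangleq A^TS_{k+1}A+W_k-S_k$. The first thing I would establish is that $\Lambda_k\geq 0$: the Riccati recursion for $S_k$ gives $\Lambda_k=\eta\,(B^TS_{k+1}A)^T(B^TS_{k+1}B+U_k)^{-1}(B^TS_{k+1}A)$, which is positive semidefinite because $S_{k+1}\geq 0$ (inherited from $S_N=W_N\geq 0$). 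Hence $X\mapsto\mathrm{Tr}(\Lambda_kX)$ is order-preserving on the positive semidefinite cone, and the theorem reduces to sandwiching $\mathbb{E}_\gamma[P_k]$ between two deterministic covariance sequences and then taking traces against $\Lambda_k$ term by term.

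Both bounds rest on the averaged one-step recursion $\mathbb{E}_\gamma[P_k\mid P_{k-1}]=g_\gamma(\tilde{h}(P_{k-1}))$, where $\tilde{h}(X)=AXA^T+Q$ and $g_\gamma(Y)\triangleq Y-\gamma\,YC^T(CYC^T+R)^{-1}CY$ is the expected measurement update with $\gamma=\gamma(h_e\psi(h_s((1-\alpha)p)))$. For the upper bound I would use that $\tilde{h}$ is affine and $g_\gamma$ is concave in the Loewner order, so $g_\gamma\circ\tilde{h}$ is concave and Jensen's inequality yields $\mathbb{E}_\gamma[P_k]\leq g_\gamma(\tilde{h}(\mathbb{E}_\gamma[P_{k-1}]))$. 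Iterating this monotone map to its fixed point --- which equals $g_\gamma(\overline{P}_k)$, where $\overline{P}_k$ solves the a~priori MARE $\overline{P}_k=\tilde{h}(g_\gamma(\overline{P}_k))=\Pi(\overline{P}_k,A^T,C^T,Q,R,\gamma)$ and exists by Lemma~\ref{lm2} once $\gamma>\gamma_c$ --- gives $\mathbb{E}_\gamma[P_k]\leq g_\gamma(\overline{P}_k)$ and therefore $J_N^*\leq J_N^\mathrm{max}$. For the lower bound I would use the identity $g_\gamma(Y)=(1-\gamma)Y+\gamma\,(Y^{-1}+C^TR^{-1}C)^{-1}$ (matrix inversion lemma), whose last summand is positive semidefinite; dropping it gives $g_\gamma(\tilde{h}(X))\geq(1-\gamma)\tilde{h}(X)=(1-\gamma)(AXA^T+Q)$. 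A short induction then shows $\mathbb{E}_\gamma[P_k]\geq(1-\gamma)\underline{P}_k$ with $\underline{P}_k=(1-\gamma)A\underline{P}_{k-1}A^T+Q$, and tracing against $\Lambda_k\geq 0$ produces $J_N^\mathrm{min}$.

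The step I expect to be the main obstacle is the upper bound: one must (i) verify that $g_\gamma$ is concave on the positive semidefinite cone, which I would do by writing $g_\gamma(Y)=(1-\gamma)Y+\gamma(Y^{-1}+C^TR^{-1}C)^{-1}$ and invoking operator concavity of the map $Y\mapsto(Y^{-1}+C^TR^{-1}C)^{-1}$, and (ii) justify that the finite-horizon iterates of $g_\gamma\circ\tilde{h}$ are dominated by the steady-state value $g_\gamma(\overline{P}_k)$, which follows from monotone convergence of the modified Riccati iteration to $\overline{P}_k$ as guaranteed by Lemmas~\ref{lm1}--\ref{lm2} under the stated controllability/observability hypotheses and $\gamma>\gamma_c$. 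The remaining work --- propagating the one-step inequalities to all $k$ by operator monotonicity, handling the invertibility of $\tilde{h}(X)$ via the usual regularization, and summing the traces --- is routine, and since $\Lambda_k\geq 0$ none of the matrix inequalities is reversed when passing to the cost.
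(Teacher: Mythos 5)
Your overall route is the paper's own: the paper likewise starts from the closed-form cost (\ref{cost}), reduces the theorem to the sandwich $\tilde P_k\le\mathbb{E}_\gamma[P_k]\le\hat P_k$ with $\tilde P_k=(1-\gamma)\underline{P}_k$ and $\hat P_k=\overline{P}_k-\gamma\overline{P}_kC^T(C\overline{P}_kC^T+R)^{-1}C\overline{P}_k$, and justifies the sandwich by precisely the concavity/monotonicity-plus-Jensen argument you outline (deferring the details to Lemma~5.2 of~\cite{schenato2007foundations}). Your explicit check that $\Lambda_k=A^TS_{k+1}A+W_k-S_k=\eta\,A^TS_{k+1}B(B^TS_{k+1}B+U_k)^{-1}B^TS_{k+1}A\ge 0$, so that taking traces preserves the Loewner order, fills in a step the paper leaves implicit, and your lower bound (drop the positive semidefinite update term, then induct) is exactly the paper's.

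The one step of yours that does not hold as stated is the passage from the one-step Jensen inequality to domination by the \emph{steady-state} MARE solution: neither ``monotone convergence'' nor the domination itself is true from an arbitrary $P_0$. Take the scalar example $A=1.2$, $B=C=Q=R=1$, $\gamma=0.9$, $P_0=10^3$: the fixed point of the a~priori modified Riccati map is $\overline{P}\approx 2.2$, so the claimed bound is $\tilde g_\alpha(\overline{P})\approx 0.84$, whereas $\mathbb{E}_\gamma[P_1]=\tilde g_\alpha(\tilde h(P_0))\approx(1-\gamma)A^2P_0\approx 144$. Monotone convergence to the fixed point from below requires an initial condition below it (e.g.\ $P_0=0$); from a general $P_0$ the Riccati iterates can exceed the fixed point for arbitrarily many steps, so a finite-$N$ bound expressed through the steady-state $\overline{P}$ is false. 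The resolution---and what the paper's cited Lemma~5.2 of~\cite{schenato2007foundations} actually provides---is that $\overline{P}_k$ in Theorem~\ref{thm0} must be read as the finite-horizon recursion $\overline{P}_k=\tilde h\circ\tilde g_\alpha(\overline{P}_{k-1})$ started from the true initial covariance; the paper typesets it, confusingly, as a fixed-point equation. Under that reading your own induction (one-step Jensen plus operator monotonicity of the update maps) closes the proof directly, with no fixed-point existence argument and no appeal to $\gamma>\gamma_c$---which is in any case not among the hypotheses of Theorem~\ref{thm0}; the steady-state solution is needed only when passing to the infinite-horizon average in Theorem~\ref{thm1}.
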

	\begin{proof}
            See Appendix~B.
		
	\end{proof}

	\textcolor{black}{
		\begin{theorem}\label{thm1}
			Consider a system with $W_k = W, \forall k$ and $U_k = U, \forall k$. 
			There exist two critical values $\underline{\alpha}$ and $\overline{\alpha}$ such that the infinite-horizon average cost $J_\infty^*\triangleq\lim_{N\rightarrow \infty}\frac{1}{N}J_N^*(\bar{x}_0,P_0)$ can be bounded if and only if $\underline{\alpha} < \alpha < \overline{\alpha}$. Moreover, $J^\mathrm{min}_\infty\leq J^{*}_\infty\leq J^\mathrm{max}_\infty$, where
			\begin{align}
				\begin{split}
					J_\infty^\mathrm{min}\triangleq&\mathrm{Tr}\left(SQ\right)+(1-\gamma(h_e\psi(h_s((1-\alpha)p))))\\&\times\mathrm{Tr}\left((A^TSA+W-S)\underline{P}\right),
				\end{split}\label{lowerbound}\\
				\begin{split}
					J_\infty^\mathrm{max}\triangleq&\mathrm{Tr}(SQ)+\mathrm{Tr}\left((A^TSA+W-S)\right.\\&\times\left(\overline{P}-\gamma(h_e\psi(h_s((1-\alpha)p)))\right.\\&\left.\left.\times \overline{P}C^T(C\overline{P}C^T+R)^{-1}C\overline{P}\right)\right),
				\end{split}\label{upperbound}
			\end{align}
			and 
			\begin{align}
				S =& A^TSA+W-\eta(h_a\alpha p) A^TSB(B^TSB+U)^{-1}B^TSA,\label{eq:s2}\\
				\underline{P} =& (1-\gamma(h_e\psi(h_s((1-\alpha)p))))A\underline{P}A^T+Q.\label{eq:up}\\
				\begin{split}
					\overline{P} =& A\overline{P}A^T+Q\\&-\gamma(h_e\psi(h_s((1-\alpha)p)) A\overline{P}C^T(C\overline{P}C^T+R)^{-1}C\overline{P}A^T.\label{eq:op}
				\end{split}
			\end{align}
			Moreover, the critical values $\underline{\alpha}$ and $\overline{\alpha}$ are the solutions of the following equations:
			\begin{align}
				\eta(h_a\underline{\alpha} p)=\eta_c,\label{eq1}\\
				\gamma(h_e\psi(h_s((1-\overline{\alpha})p)))=\gamma_c,\label{eq2}
			\end{align}
			where $\eta_c$ and $\gamma_c$ are the critical probabilities mentioned in Lemma \ref{lm1} and Lemma \ref{lm2}, respectively.
		\end{theorem}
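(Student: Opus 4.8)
The plan is to reduce the boundedness of the infinite-horizon average cost to two decoupled conditions—the convergence of the control Riccati recursion $\{S_k\}$ and the uniform boundedness of the filtering error-covariance sequence $\{\mathbb{E}_\gamma[P_k]\}$—and then translate each condition into a one-sided constraint on $\alpha$ using the monotone dependence of $\eta(h_a\alpha p)$ and $\gamma(h_e\psi(h_s((1-\alpha)p)))$ on $\alpha$. I would start from the exact cost expression (\ref{cost}) together with the finite-horizon sandwich $J_N^\mathrm{min}\le J_N^*\le J_N^\mathrm{max}$ from Theorem \ref{thm0}. Dividing by $N$ and letting $N\to\infty$, the boundary terms $\bar{x}_0^TS_0\bar{x}_0+\mathrm{Tr}(S_0P_0)$ contribute $O(1/N)\to 0$, and the two remaining Cesàro-type averages converge to their steady-state integrands once the backward finite-horizon sequences converge: since the finite-horizon control Riccati reaches a fixed neighborhood of $S$ within finitely many steps of the terminal, all but a bounded number of summands equal the steady-state value, so the averages converge to $\mathrm{Tr}(SQ)$ and the corresponding error-covariance integrand. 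This reduces the computation of the bounds to the limits $S_k\to S$, $\underline{P}_k\to\underline{P}$, $\overline{P}_k\to\overline{P}$ in (\ref{eq:s2})--(\ref{eq:op}).

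For the control side, I would invoke Lemma \ref{lm1}: under the stated controllability/observability and with $A$ unstable, the MARE (\ref{MARE}) admits a strictly positive definite fixed point $S$, and hence $\{S_k\}$ converges, iff $\eta(h_a\alpha p)>\eta_c$. Because $\eta(\cdot)$ is continuous and strictly increasing and $h_a,p>0$, the map $\alpha\mapsto\eta(h_a\alpha p)$ is continuous and strictly increasing, so the intermediate value theorem furnishes a unique $\underline{\alpha}$ solving (\ref{eq1}) with $\eta(h_a\alpha p)>\eta_c\iff\alpha>\underline{\alpha}$. Conversely, if $\alpha\le\underline{\alpha}$ then $\eta\le\eta_c$ and $S_k$ diverges; since $A^TS_{k+1}A+W_k-S_k=\eta A^TS_{k+1}B(B^TS_{k+1}B+U)^{-1}B^TS_{k+1}A\ge 0$ and $Q\ge 0$, the positive terms in (\ref{cost}) grow unboundedly and $J_\infty^*$ is unbounded.

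For the estimation side, I would apply Lemma \ref{lm2}: $\mathbb{E}_\gamma[P_k]$ is uniformly bounded iff $\gamma(h_e\psi(h_s((1-\alpha)p)))>\gamma_c$. Here $\gamma(\cdot)$ is increasing, $\psi(\cdot)$ is nondecreasing, and $(1-\alpha)p$ decreases in $\alpha$, so $\alpha\mapsto\gamma(h_e\psi(h_s((1-\alpha)p)))$ is continuous and nonincreasing, which yields a critical value $\overline{\alpha}$ solving (\ref{eq2}) with $\gamma>\gamma_c\iff\alpha<\overline{\alpha}$. If $\alpha\ge\overline{\alpha}$, then $\gamma\le\gamma_c$, $\mathbb{E}_\gamma[P_k]$ is unbounded, and together with the positive semidefinite coefficient $A^TS_{k+1}A+W_k-S_k$ the last sum in (\ref{cost}) diverges. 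Combining the two necessary-and-sufficient one-sided conditions yields boundedness iff $\underline{\alpha}<\alpha<\overline{\alpha}$; in this open window the $N\to\infty$ limits of $J_N^\mathrm{min}/N$ and $J_N^\mathrm{max}/N$ from Theorem \ref{thm0}, evaluated at the fixed points $S,\underline{P},\overline{P}$, deliver the explicit bounds (\ref{lowerbound})--(\ref{upperbound}) (the upper bound being informative whenever the MARE (\ref{eq:op}) for $\overline{P}$ admits a solution).

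The main obstacle I anticipate is the \emph{only if} direction: I must argue that divergence of either $\{S_k\}$ or $\{\mathbb{E}_\gamma[P_k]\}$ genuinely forces the \emph{average} cost to blow up rather than being washed out by the $1/N$ normalization or by cancellation. This requires verifying that the relevant coefficient matrices are positive semidefinite, so no cancellation occurs, and that the divergence is geometric (from the unstable Riccati/error-covariance recursion) and therefore survives the Cesàro average. The secondary point needing care is establishing the continuous, monotone dependence of $\eta$ and $\gamma$ on $\alpha$ so that $\underline{\alpha}$ and $\overline{\alpha}$ are well defined via the intermediate value theorem; this follows directly from the stated regularity of $\eta$, $\psi$, and $\gamma$.
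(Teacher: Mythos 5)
Your proposal is correct and takes essentially the same route as the paper's proof: define $\underline{\alpha}$ and $\overline{\alpha}$ as the unique solutions of (\ref{eq1})--(\ref{eq2}) via the monotone, continuous dependence of $\eta(h_a\alpha p)$ and $\gamma(h_e\psi(h_s((1-\alpha)p)))$ on $\alpha$, invoke Lemma~\ref{lm1} and Lemma~\ref{lm2} for the two one-sided criticality conditions, and obtain (\ref{lowerbound})--(\ref{upperbound}) by passing the finite-horizon bounds of Theorem~\ref{thm0} to the limit $N\rightarrow\infty$. Your extra care on the \emph{only if} direction (positive semidefiniteness of $A^TS_{k+1}A+W_k-S_k$ ruling out cancellation, and divergence surviving the Ces\`{a}ro average) fills in a step the paper leaves implicit by deferring to the cited lemmas, but it does not constitute a different approach.
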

		\begin{proof}
                See Appendix~C.
		\end{proof}
	}
	\begin{remark}
		If $\alpha$ is smaller than the left critical value $\underline{\alpha}$, the power for transmitting the control signals will be relatively low. 
		As a result, the cost is unbounded due to inadequate control.
		Similarly, if $\alpha$ is larger than the right critical value $\overline{\alpha}$, the power for transmitting the measurements will be relatively low. Consequently, the cost will be unbounded due to inaccurate estimates. \textcolor{black}{Note that the exact value of $\overline{\alpha}$ cannot be obtained as $\gamma_c$ cannot be calculated exactly as well, but the upper bound and the lower bound of $\overline{\alpha}$ can be found according to the bounds of $\gamma_c$ in Lemma \ref{lm2}.}
	\end{remark}
	
	
	
	\subsection{Optimization Problem}
	\textcolor{black}{When the power splitting ratio $\alpha$ satisfies $\underline{\alpha}<\alpha<\overline{\alpha}$, the LQG cost $J^*_\infty$ can be bounded. However, different values of $\alpha$ result in different costs. 
	} 
	In this subsection, an optimization problem is proposed as an aid to the selection of $\alpha$. The intention is to minimize the cost $J^*_\infty$ as much as possible. However, $J^*_\infty$ cannot be minimized directly since it depends on the specific realization of the sequence $\{\gamma_k\}$ and cannot be computed analytically. From the perspective of robustness, therefore, it is reasonable to minimize the upper bound of $J^*_\infty$, i.e., $J^\text{max}_\infty$. 
	
	Based on the above statement, we have the following optimization problem:
	\begin{align}\label{problem}
		\begin{split}
			\min_{\alpha,\overline{P},S}\quad &J^\text{max}_\infty,
		\end{split}\\
		\begin{split}
			\mathrm{s.t.}\quad &S = A^TSA+W\\&-\eta(h_a\alpha p) A^TSB(B^TSB+U)^{-1}B^TSA,
		\end{split}\label{eq:1}\\
		\begin{split}
			&\overline{P} = A\overline{P}A^T+Q\\&-\gamma(h_e\psi(h_s((1-\alpha)p)) A\overline{P}C^T(C\overline{P}C^T+R)^{-1} \\&\times C\overline{P}A^T,
		\end{split}\label{eq:2}\\
		&0\leq\alpha\leq 1,
	\end{align}
	where $p$ is a positive constant, and $\eta(\cdot): \left[ \right. 0, \infty \left)\right.\rightarrow [0,1]$ and $\gamma(\cdot): \left[ \right. 0, \infty \left)\right.\rightarrow [0,1]$ are monotonically nondecreasing continuous concave functions.
	\textcolor{black}{
		\begin{proposition}\label{prop:1}
			Problem (\ref{problem}) is equivalent to the following problem:
			\begin{align}\label{problem2}
				\begin{split}
					\min_{\alpha,\tilde{P},S}\quad &\mathrm{Tr}\left((A^T S A+W-S)\tilde{P}\right)+\mathrm{Tr}\left(SQ\right),
				\end{split}\\
				\mathrm{s.t.}\quad 	&S=\hat{h}\circ\hat{g}_{\alpha}(S),\label{eq:s}\\
				&\tilde{P}=\tilde{g}_{\alpha}\circ\tilde{h}(\tilde{P}),\label{eq:p}\\
				&0\leq\alpha\leq 1.
			\end{align}
		\end{proposition}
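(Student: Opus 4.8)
The plan is to exhibit an objective-preserving bijection between the feasible sets of the two problems. Problems~(\ref{problem}) and~(\ref{problem2}) share the variables $\alpha$ and $S$, the identical constraint on $S$ [constraint~(\ref{eq:1}) is exactly~(\ref{eq:s})], and the identical summand $\mathrm{Tr}(SQ)$ together with the identical factor $A^TSA+W-S=\hat{h}(S)-S$ in the objective. The only difference is that Problem~(\ref{problem}) carries the one-step prediction covariance $\overline{P}$ subject to~(\ref{eq:2}), whereas Problem~(\ref{problem2}) carries $\tilde{P}$ subject to~(\ref{eq:p}). It therefore suffices to relate these two covariance variables and to check that the objectives agree under that relation.

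The key identity is that constraint~(\ref{eq:2}) is precisely $\overline{P}=\tilde{h}\circ\tilde{g}_\alpha(\overline{P})$ with $\tilde{h}(X)=AXA^T+Q$, and that the matrix multiplying $A^TSA+W-S$ inside $J_\infty^\mathrm{max}$ of~(\ref{upperbound}), namely $\overline{P}-\gamma(h_e\psi(h_s((1-\alpha)p)))\overline{P}C^T(C\overline{P}C^T+R)^{-1}C\overline{P}$, is exactly $\tilde{g}_\alpha(\overline{P})$. Hence, setting $\tilde{P}\triangleq\tilde{g}_\alpha(\overline{P})$ collapses $J_\infty^\mathrm{max}$ into $\mathrm{Tr}\left((A^TSA+W-S)\tilde{P}\right)+\mathrm{Tr}(SQ)$, which is the objective of Problem~(\ref{problem2}).

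Finally I would verify that this substitution is a feasibility-preserving bijection. Given $\overline{P}$ feasible for~(\ref{eq:2}) and $\tilde{P}=\tilde{g}_\alpha(\overline{P})$, applying $\tilde{g}_\alpha$ to $\overline{P}=\tilde{h}(\tilde{g}_\alpha(\overline{P}))$ yields $\tilde{g}_\alpha\circ\tilde{h}(\tilde{P})=\tilde{g}_\alpha(\overline{P})=\tilde{P}$, so $\tilde{P}$ satisfies~(\ref{eq:p}); conversely, given $\tilde{P}$ feasible for~(\ref{eq:p}) and $\overline{P}=\tilde{h}(\tilde{P})$, applying $\tilde{h}$ to $\tilde{P}=\tilde{g}_\alpha(\tilde{h}(\tilde{P}))$ yields $\tilde{h}\circ\tilde{g}_\alpha(\overline{P})=\tilde{h}(\tilde{P})=\overline{P}$, so $\overline{P}$ satisfies~(\ref{eq:2}). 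The two intertwining maps $\overline{P}\mapsto\tilde{g}_\alpha(\overline{P})$ and $\tilde{P}\mapsto\tilde{h}(\tilde{P})$ are thus mutually inverse on the respective fixed-point sets, fix $(\alpha,S)$, and preserve the objective value by the identity above, which establishes the equivalence. The argument is routine once $\tilde{P}=\tilde{g}_\alpha(\overline{P})$ is recognized; the only point demanding care — rather than a genuine obstacle — is to restrict attention to the regime $\underline{\alpha}<\alpha<\overline{\alpha}$ of Theorem~\ref{thm1}, where by Lemmas~\ref{lm1}--\ref{lm2} the relevant Riccati fixed points exist, are positive semidefinite, and are unique, so that the correspondence $\overline{P}\leftrightarrow\tilde{P}$ is genuinely well defined.
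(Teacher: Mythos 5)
Your proposal is correct and follows essentially the same route as the paper: the paper's proof also rests on the substitution $\tilde{P}=\tilde{g}_\alpha(\overline{P})$, showing $\tilde{P}=\tilde{g}_\alpha\circ\tilde{h}(\tilde{P})$ and letting the objective identity follow. Your version merely makes explicit what the paper compresses into ``it can be seen''---namely the inverse map $\tilde{P}\mapsto\tilde{h}(\tilde{P})$, the mutual-inverse check, and the preservation of the objective value---which is a welcome but not substantively different elaboration.
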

		\begin{proof}
                See Appendix~D.
		\end{proof}
	}
	\begin{remark}
		\textcolor{black}{In this paper, we only consider the situation where problem~(\ref{problem}) is feasible, i.e., $\underline{\alpha}\leq\overline{\alpha}$. Otherwise, the infinite-horizon average cost $J^\text{max}_\infty$ will be unbounded and there is no need to study this problem.}
	\end{remark}
	
	At the first sight, problem (\ref{problem}) is non-convex and it is difficult to convert it to a convex optimization problem. \textcolor{black}{However, its numerical solution can be easily obtained.}
	\textcolor{black}{First, we can discretize $\alpha$ with interval size $\delta$. With $\alpha$ fixed, $S$ and $\tilde{P}$ can be uniquely determined by respectively solving the equations (\ref{eq:s}) and (\ref{eq:p}) via the iterative method, i.e., executing the following iterations until $S$ and $\tilde{P}$ converge to their steady-state values:
		\begin{align}
			S_{k+1}&=\hat{h}\circ\hat{g}_{\alpha}(S_k),\\
			\tilde{P}_{k+1}&=\tilde{g}\alpha\circ\tilde{h}(\tilde{P}_k),
		\end{align}
		where $S_0$ and $\tilde{P}_0$ should be initialized. 
		Note that $S$ and $\tilde{P}$ converge to their steady-state values exponentially fast~\cite{anderson2012optimal}. Therefore, $S$ and $\tilde{P}$ can be calculated by the iterative method efficiently. Then, we can substitute $\alpha$, $S$ and $\tilde{P}$ in the objective function (\ref{problem}) and obtain the corresponding objective value. By iterating through all the discretized $\alpha$, i.e., $\alpha\in\{\underline{\alpha}+k\delta, k = 0,1,\ldots |\underline{\alpha}\leq \underline{\alpha}+k\delta\leq\overline{\alpha}\}$, the value of $\alpha$ which can minimize the value of objective function can be found. 
		The searching algorithm for obtaining the numerical solution of the problem~(\ref{problem}) is summarized in Algorithm \ref{algorithm}.}
	
	\begin{algorithm}
		\caption{Obtain numerical solution of problem (\ref{problem})}
		\begin{algorithmic}\label{algorithm}
			\REQUIRE $\delta, \underline{\alpha}, \overline{\alpha}, P_0, S_0, A, B, C, Q, R, W, U$
			\ENSURE $\alpha$, $J$
			\STATE \textbf{initialize} $\hat{\alpha}=\underline{\alpha}$ $\alpha=\underline{\alpha}$, $J=0$
			\WHILE{$\hat{\alpha}\in[\underline{\alpha},\overline{\alpha}]$}
			\STATE $\tilde{P}=P_0$, $S=S_0$
			\REPEAT
			\STATE $\tilde{P}=\tilde{g}_{\hat{\alpha}}\circ\tilde{h}(\tilde{P})$
			\UNTIL{converge}
			\REPEAT
			\STATE $S=\hat{h}\circ\hat{g}_{\hat{\alpha}}(S)$
			\UNTIL{converge}
			\STATE compute $J^\mathrm{max}_\infty$ according to (\ref{upperbound})
			\IF{$\hat{\alpha}=\underline{\alpha}$}
			\STATE $J=J^\mathrm{max}_\infty$
			\ELSIF{$J^\mathrm{max}_\infty<J$}
			\STATE $J=J^\mathrm{max}_\infty$, $\alpha=\hat{\alpha}$
			\ENDIF
			\STATE $\hat{\alpha}=\hat{\alpha}+\delta$
			\ENDWHILE
			\RETURN	
		\end{algorithmic}
	\end{algorithm}
	\begin{remark}
		The accuracy of the obtained numerical solution depends on the selected parameter $\delta$. The optimality gap is bounded by $\delta$, i.e., $|\alpha-\alpha^*|<\delta$, where $\alpha$ is the output of Algorithm \ref{algorithm} and $\alpha^*$ is the optimal solution of problem (\ref{problem}). \textcolor{black}{A smaller $\delta$ brings a higher accuracy, but also leads to more computational overhead.}
	\end{remark}
	
	


	%

	\section{Simulation}
	In this section, we choose $A=1.2$ and $B=C=Q=R=W=U=1$.  
	Moreover, we set channel fading $h_s=h_a=0\text{dB}$ and $h_e=-3\text{dB}$. The transmission power of the transmitter is $p=0.3$mW. 
	In this simulation, we adopt the linear energy harvesting model~\cite{yang2022joint}, i.e.,
	\begin{equation}
		r=\xi (h_s(1-\alpha) p +\sigma_e^2),
	\end{equation}
	where $\sigma_e^2$ denotes the noise power introduced by the receiver antenna and $\xi$ denotes the energy conversion efficiency, which is a constant and characterizes the energy loss of converting the harvested energy to electrical energy. For convenience, $\xi$ is assumed to be $1$.
	In practice, $\sigma_e^2$ is much smaller than $h_s(1-\alpha) p$ so that it can be neglected. Thus, for simplicity, we assume $\sigma_e^2=0$, i.e.,
	\begin{equation}
		r= h_s(1-\alpha) p.
	\end{equation}
	
	The measurements $y_k$ and the control signals $u_k$ are transmitted via the binary phase shifting keying (BPSK) transmission scheme with $B$ bits per packet (symbol). Then we have
	\begin{equation}
		\begin{split}
			\mathbb{P}(\eta_{k}=1|h_a, \alpha, p)=&  \eta(h_a\alpha p)\\=&\left(\int_{-\infty}^{\sqrt{\textcolor{black}{\frac{2h_a\alpha p T_s}{B N_0}}}} \frac{1}{\sqrt{2\pi}}e^{-x^2/2}dx\right)^B,
		\end{split}
	\end{equation}
	and
	\begin{equation}
		\begin{split}
			\mathbb{P}(\gamma_{k}=1|h_e, r)=&  \gamma(h_e r)\\=&\left(\int_{-\infty}^{\sqrt{\textcolor{black}{\frac{2h_e r T_s}{B N_0}}}} \frac{1}{\sqrt{2\pi}}e^{-x^2/2}dx\right)^B,
		\end{split}
	\end{equation}
	where $N_0/2$ is the two-sided noise power spectral density and $T_s$ is the symbol transmission time. We choose $B=2$, $T_s=2\times10^{-7}\text{s}$, and $N_0=2\times10^{-8}\text{W/Hz}$ in the simulations. 
 
	First, we set the time horizon $T=500$ and conduct a Monte Carlo experiment with 1000 runs.
	\begin{figure}[h]
		\centering
		\includegraphics[width=2.7in]{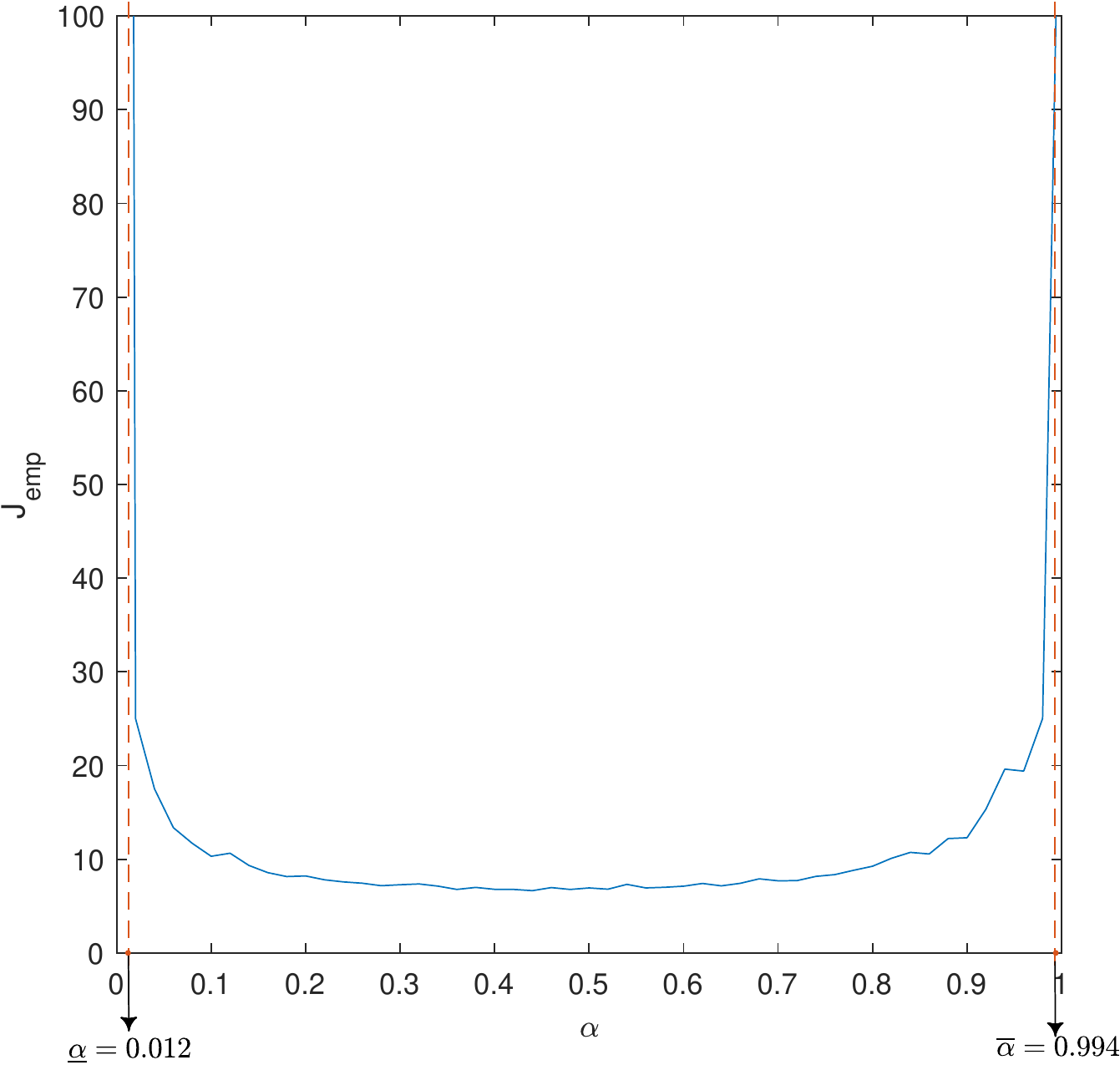}
		\caption{\textcolor{black}{Empirical cost $J_\text{emp}$ for different values of the power splitting ratio $\alpha$}}
		\label{unstable_TCP}
	\end{figure}
	Fig.\ref{unstable_TCP} shows the empirical cost for different values of the power splitting ratio $\alpha$. 
	It can be seen that there exist two critical values for the ratio $\alpha$, i.e., $\underline{\alpha}=0.012$ and $\overline{\alpha}=0.994$. If $\alpha$ is less than the left critical value, the power for transmitting the control signal will be relatively low. When $\alpha$ is less than the left critical value or larger than the right critical value, the cost is unbounded due to inadequate control and inaccurate estimate, respectively.
	
	\begin{figure}[h]
		\centering
		\includegraphics[width=3.4in]{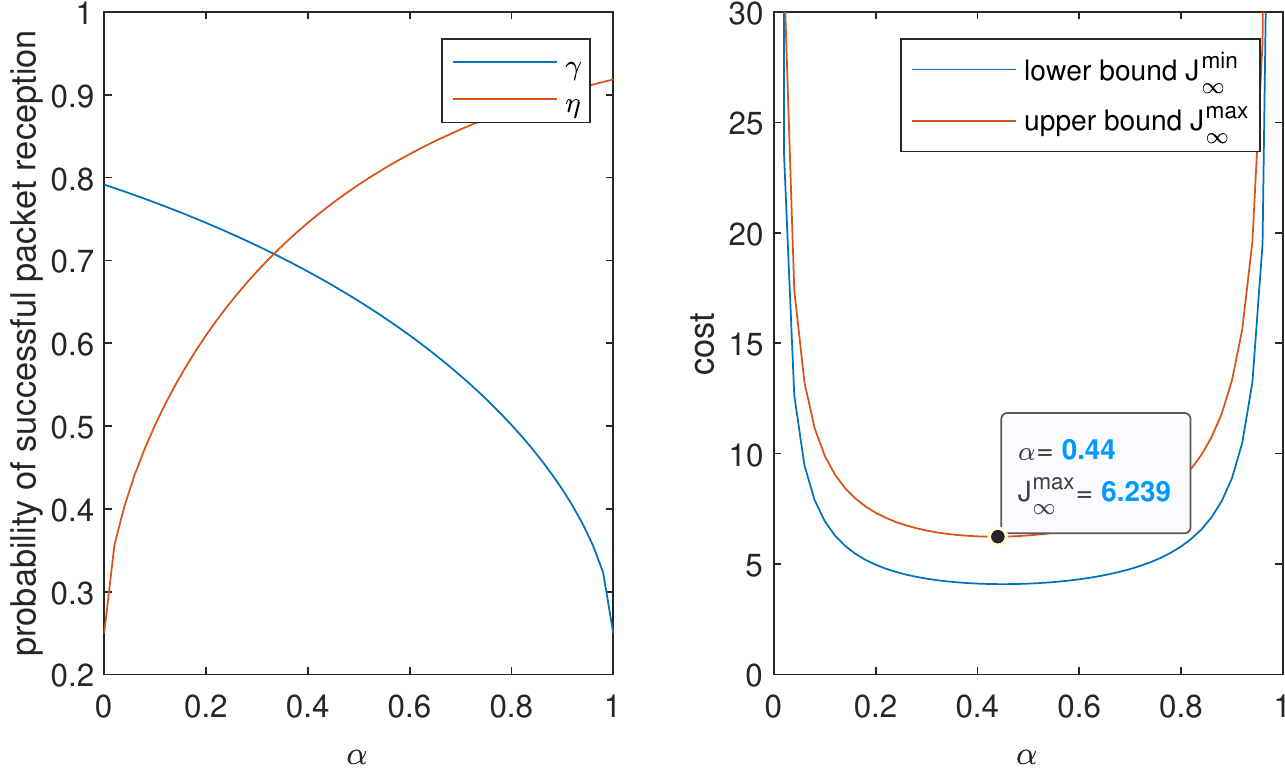}
		\caption{\textcolor{black}{Simulation results with $A=1.2$, $B=Q=R=W=U=1$, $h_s=h_a=0\text{dB}$ and $h_e=-3\text{dB}$.}}
		\label{fig:1}
	\end{figure}
	\begin{figure}[h]
		\centering
		\includegraphics[width=3.4in]{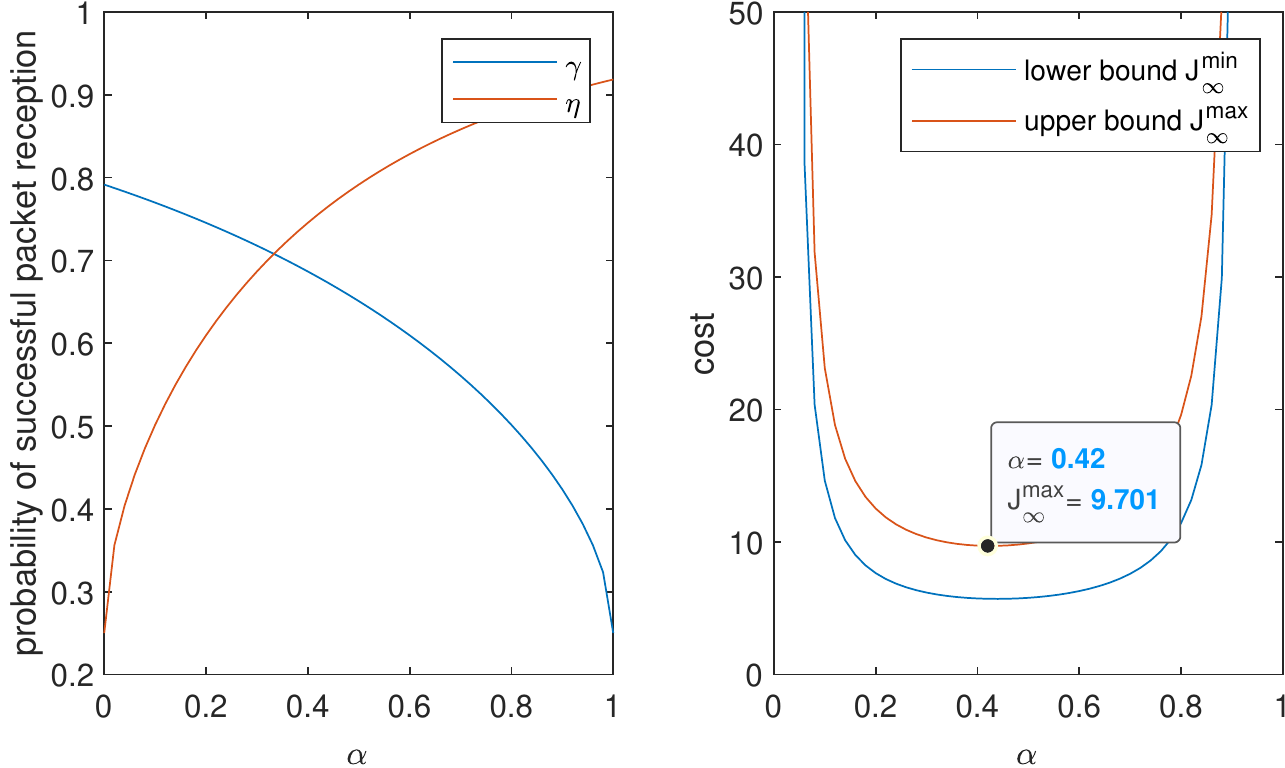}
		\caption{\textcolor{black}{Simulation results with $A=1.3$, $B=Q=R=W=U=1$, $h_s=h_a=0\text{dB}$ and $h_e=-3\text{dB}$.}}
		\label{fig:2}
	\end{figure}
	\begin{figure}[h]
		\centering
		\includegraphics[width=3.4in]{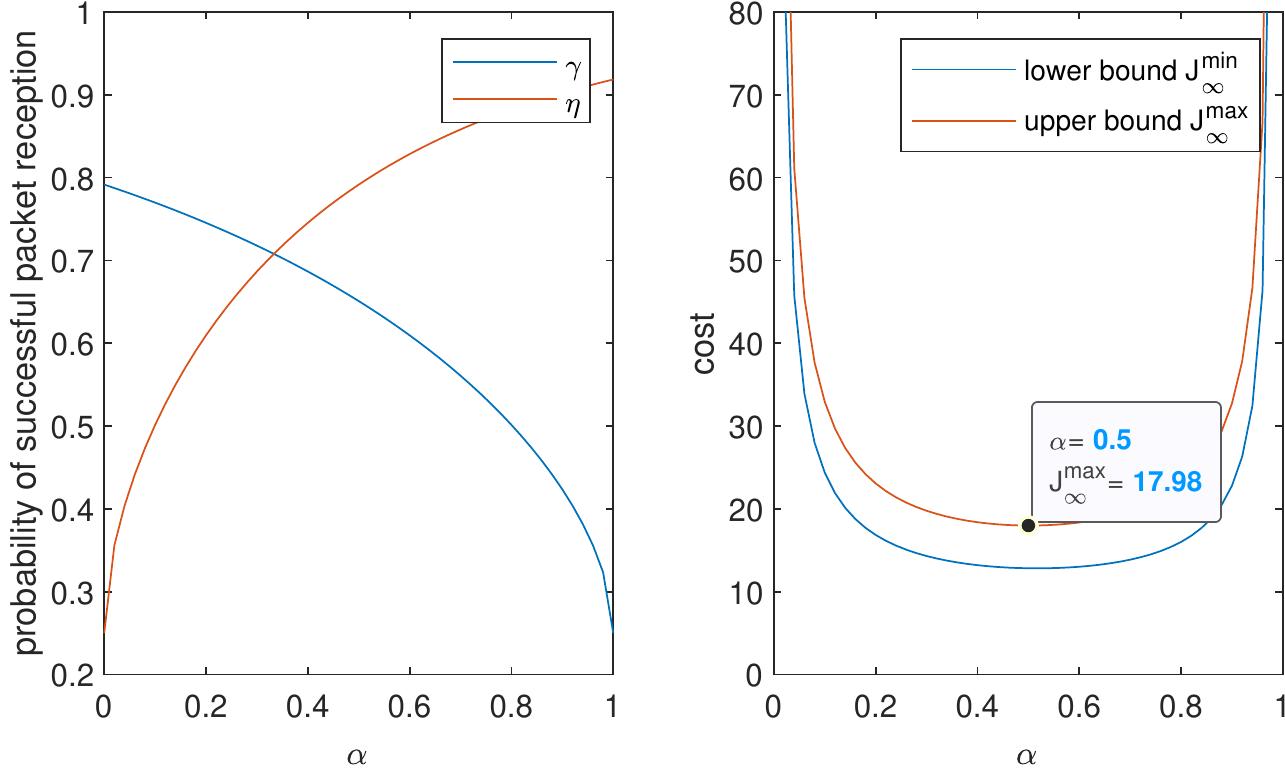}
		\caption{\textcolor{black}{Simulation results with $A=1.2$, $B=Q=R=W=1$, $U=10$, $h_s=h_a=0\text{dB}$ and $h_e=-3\text{dB}$.}}
		\label{fig:3}
	\end{figure}
	\begin{figure}[h]	
		\centering
		\includegraphics[width=3.4in]{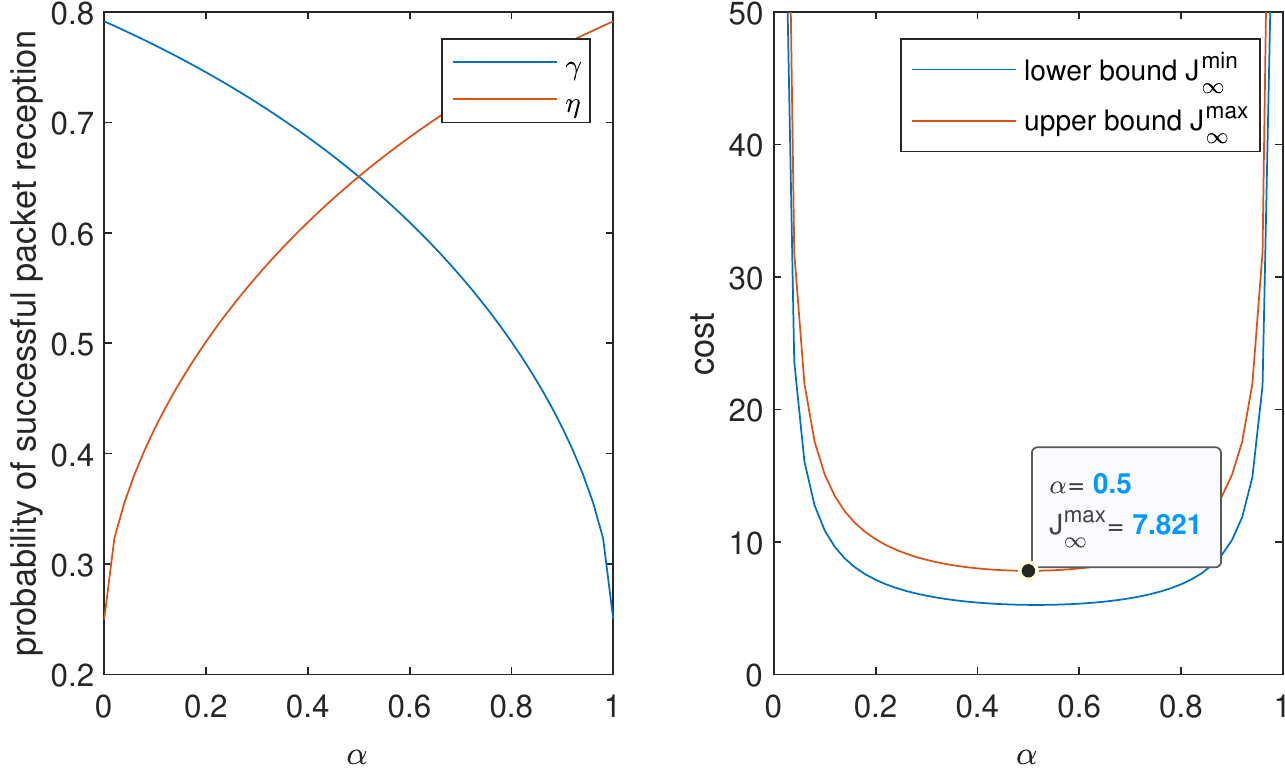}
		\caption{\textcolor{black}{Simulation results with $A=1.2$, $B=Q=R=W=U=1$, $h_s=0\text{dB}$, $h_a=-3\text{dB}$ and $h_e=-3\text{dB}$.}}
		\label{fig:4}
	\end{figure}
	\textcolor{black}{Then, we present some simulation results to illustrate the main results in Section \ref{sec:finite_lqg}. 
		Fig.\ref{fig:1}-\ref{fig:4} show the plot of the probability of successful packet reception (left) and the upper bound $J^\text{max}_\infty$ and the lower bound $J^\text{min}_\infty$ of $J^\text{*}_\infty$ defined in Theorem \ref{thm1} (right) under different simulation parameter settings. Clearly, two critical values for $\alpha$ can also be observed. 
		Fig.\ref{fig:1}-\ref{fig:4} also show the value of $\alpha$ searched by Algorithm \ref{algorithm} ($\delta=0.02$). 
		By comparing Fig.\ref{fig:1} and Fig.\ref{fig:2}, one can see that the stability of the system has an effect on the two critical values of $\alpha$: a more unstable system leads to a lower $\overline{\alpha}$ and a higher $\underline{\alpha}$. 
		By comparing Fig.\ref{fig:1} and Fig.\ref{fig:3}, it can be seen that the value of $\alpha$ minimizing $J^\text{max}_\infty$ will increase when the cost of control is increased. Since inadequate control will cause more cost than inaccurate estimation, the transmitter will tend to use more energy to transmit the control signal to avoid a higher total cost. On the contrary, when the cost of estimation is increased, the transmitter will tend to transfer more energy to the sensor. 
		By comparing Fig.\ref{fig:1} and Fig.\ref{fig:4}, the influence of the channel states can be observed. When the state of the channel from the transmitter to the actuator, i.e., $h_a$, gets worse, the transmitter will use more energy to transmit the control signal to guarantee the control performance. Therefore, the value of $\alpha$ minimizing $J^\text{max}_\infty$ will increase. Similarly, when $h_s$ and $h_e$ get worse, the transmitter will transfer more energy to the sensor to ensure the estimation performance, and the value of $\alpha$ minimizing $J^\text{max}_\infty$ will decrease.
	}
	
	\section{Conclusion}
	In this paper, we consider using a novel technology, the so-called SWIPT, to recharge the sensor in the LQG control, which provides a new approach to prolonging the network lifetime. We show that there exist two critical values for the power splitting ratio $\alpha$, and the cost of the infinite horizon LQG control is bounded if and only if $\alpha$ is between these two critical values. Then, we propose an optimization problem to derive the optimal value of $\alpha$. This problem is non-convex but its numerical solution can be derived by our proposed algorithm efficiently. Moreover, we provide the feasible condition of the proposed optimization problem. Simulation results are presented to verify and illustrate the main theoretical results. 
	\textcolor{black}{
		One direction of future work is to consider beamforming when devices are equipped with multiple antennas, which can further improve the spectral efficiency of the system. 
		Another direction is to consider the scenario when there are multiple sensors and multiple actuators.
	}

\begin{appendices}
\section{Proof of Lemma~\ref{lm:1}}
We will prove this lemma by induction. 
The proof consists of two steps:
\begin{enumerate}
    \item[S1.] If $0\leq\alpha_1\leq\alpha_2\leq 1$, then $\hat{h}\circ\hat{g}_{\alpha_1}(X)\leq\hat{h}\circ\hat{g}_{\alpha_2}(X)$;
    \item[S2.] If $(\hat{h}\circ\hat{g}_{\alpha_1})^k(X)\leq(\hat{h}\circ\hat{g}_{\alpha_2})^k(X)$, then $(\hat{h}\circ\hat{g}_{\alpha_1})^{k+1}(X)\leq(\hat{h}\circ\hat{g}_{\alpha_2})^{k+1}(X), \forall 0\leq\alpha_1\leq\alpha_2\leq 1$.
\end{enumerate}
The first step can be easily completed. We will mainly focus on the second step.
\begin{proposition}\label{prop1}
    If $X\geq Y\geq 0$, then $\hat{h}\circ\hat{g}_{\alpha}(X)\geq\hat{h}\circ\hat{g}_{\alpha}(Y)$.
\end{proposition}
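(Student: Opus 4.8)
The plan is to recognize that, recalling the definitions of $\hat h$ and $\hat g_\alpha$, the composition $\hat h\circ\hat g_\alpha$ is precisely the modified Riccati operator of the MARE~(\ref{MARE}), namely
\begin{equation*}
\Phi(X)\triangleq\hat h\circ\hat g_\alpha(X)=A^TXA+W-\eta A^TXB(B^TXB+U)^{-1}B^TXA,
\end{equation*}
with $\eta=\eta(\alpha)\in[0,1]$, and then to prove monotonicity of $\Phi$ through its variational (minimization) characterization. This is the standard device for Riccati-type maps: expressing $\Phi$ as a pointwise minimum over gain matrices of operators that are each manifestly monotone.

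Concretely, first I would introduce, for an arbitrary gain matrix $K$ of compatible dimension, the operator
\begin{equation*}
F_K(X)\triangleq\eta(A+BK)^TX(A+BK)+(1-\eta)A^TXA+W+\eta K^TUK,
\end{equation*}
and establish the identity $\Phi(X)=\min_K F_K(X)$, with the minimum attained at $K^\star=-(B^TXB+U)^{-1}B^TXA$. The verification is a completing-the-square computation: substituting $K^\star$ collapses $F_{K^\star}(X)$ to $\Phi(X)$, while for a general $K$ one checks $F_K(X)-F_{K^\star}(X)=\eta(K-K^\star)^T(B^TXB+U)(K-K^\star)\geq0$. Here $B^TXB+U>0$ because $U>0$ and $X\geq0$, so $K^\star$ and the inverse are well defined.

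Next, monotonicity of each $F_K$ is immediate: if $X\geq Y\geq0$, then $X-Y\geq0$, and since congruence transformations preserve positive semidefiniteness while the scalar weights $\eta$ and $1-\eta$ are nonnegative, we obtain
\begin{equation*}
F_K(X)-F_K(Y)=\eta(A+BK)^T(X-Y)(A+BK)+(1-\eta)A^T(X-Y)A\geq0.
\end{equation*}
Finally I would assemble the pieces: for every $K$ we have $F_K(X)\geq F_K(Y)\geq\min_{K'}F_{K'}(Y)=\Phi(Y)$, and taking the minimum over $K$ on the left yields $\Phi(X)=\min_K F_K(X)\geq\Phi(Y)$, which is exactly the claim $\hat h\circ\hat g_\alpha(X)\geq\hat h\circ\hat g_\alpha(Y)$.

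The main obstacle is establishing the minimization representation $\Phi=\min_K F_K$—that is, the completing-the-square identity—since that is where all the algebra lives; once it is in hand, monotonicity of $F_K$, and hence of $\Phi$, follows with essentially no further work. A more pedestrian alternative would be to prove monotonicity of $\hat g_\alpha$ and $\hat h$ separately and then compose, but monotonicity of $\hat g_\alpha$ alone is no easier than the whole statement (it is the same Riccati difference in disguise), so routing through the min-representation, which also absorbs the factor $\eta$ cleanly, is the more economical path.
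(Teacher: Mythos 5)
Your proof is correct, but note that the paper itself offers no proof of this proposition at all: in Appendix~A it is stated and immediately used in the induction step, its justification being implicitly delegated to the literature (it is essentially Lemma~1 of~\cite{sinopoli2004kalman} and Lemma~5.2 of~\cite{schenato2007foundations}). What you have written is precisely the standard argument from those references --- identify $\hat{h}\circ\hat{g}_{\alpha}$ with the MARE operator $\Pi(\cdot,A,B,W,U,\eta)$, establish the variational characterization $\Phi(X)=\min_K F_K(X)$ by completing the square in $K$, and observe that each $F_K$ is a nonnegatively weighted sum of congruence transformations plus a constant and hence monotone --- so you have correctly supplied the proof the paper omits, rather than taken a different route from it. Two details are worth making explicit if this is to stand alone. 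First, you silently (and correctly) repaired the paper's typo: as printed, $\hat{g}(\alpha,X)=X^{-1}-\eta(\alpha)A^TXB(B^TXB+U)^{-1}B^TXA$, which does not compose with $\hat{h}$ to give the Riccati map; the intended operator, forced by the requirement that the fixed point of $\hat{h}\circ\hat{g}_{\alpha}$ in~(\ref{eq:s}) coincide with~(\ref{eq:1}), is $\hat{g}_{\alpha}(X)=X-\eta(\alpha)XB(B^TXB+U)^{-1}B^TX$, which is exactly the reading your identity $\hat{h}\circ\hat{g}_{\alpha}=\Phi$ presumes; stating this explicitly would protect the proof against the printed definition. Second, your argument uses $U>0$ (so that $B^TXB+U$ is invertible for every $X\geq 0$) and $\eta(\alpha)\in[0,1]$; both are standing assumptions of the paper's LQG setup, but they are the hypotheses on which the completing-the-square step and the nonnegativity of the weights $\eta$, $1-\eta$ rest, so they deserve a mention.
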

If $(\hat{h}\circ\hat{g}_{\alpha_1})^k(X)\leq(\hat{h}\circ\hat{g}_{\alpha_2})^k(X)$, we have
\begin{equation}
    \begin{split}
        (\hat{h}\circ\hat{g}_{\alpha_1})^{k+1}(X)
        &=\hat{h}\circ\hat{g}_{\alpha_1}((\hat{h}\circ\hat{g}_{\alpha_1})^{k}(X))\\&\leq	
        \hat{h}\circ\hat{g}_{\alpha_1}((\hat{h}\circ\hat{g}_{\alpha_2})^{k}(X))\\&\leq\hat{h}\circ\hat{g}_{\alpha_2}((\hat{h}\circ\hat{g}_{\alpha_2})^{k}(X))\\&=(\hat{h}\circ\hat{g}_{\alpha_2})^{k+1}(X),
    \end{split}
\end{equation}
where the first inequality is from proposition \ref{prop1}, and the second inequality is from S1. 

\section{Proof of Theorem~\ref{thm0}}
First, we have the following lemma:
\begin{lemma}
    The expected error covariance matrix $\mathbb{E}_\gamma[P_{k}]$ satisfies the following bounds:
    \begin{equation}
        \tilde{P}_k \leq \mathbb{E}_\gamma[P_{k}] \leq 	\hat{P}_k,
    \end{equation}
where
\begin{align}
    \tilde{P}_k =& (1-\gamma(h_e\psi(h_s((1-\alpha)p))))\underline{P}_k, \\
    \begin{split}
        \hat{P}_k =&  \overline{P}_k\\&-\gamma(h_e\psi(h_s((1-\alpha)p)))\overline{P}_kC^T(C\overline{P}_kC^T+R)^{-1}C\overline{P}_k.
    \end{split}
\end{align}
\end{lemma}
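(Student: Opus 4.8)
The plan is to bound the random filtered covariance $P_k$ by separating its two constituent steps---the time update $P_k^- = AP_{k-1}A^T + Q$ and the measurement update $P_k = P_k^- - \gamma_k P_k^- C^T(CP_k^-C^T+R)^{-1}CP_k^-$---and then pass to expectations, exploiting that each drop indicator $\gamma_k$ is independent of $P_k^-$, since $P_k^-$ is a function of $\gamma_1,\ldots,\gamma_{k-1}$ only. Throughout I abbreviate $\gamma \triangleq \gamma(h_e\psi(h_s((1-\alpha)p)))$, consistent with the paper's convention, and I repeatedly use the monotonicity of the maps involved in the Loewner order.

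For the lower bound I would first observe that the measurement correction is dominated by the prior covariance, $P_k^- C^T(CP_k^-C^T+R)^{-1}CP_k^- \leq P_k^-$, so that $P_k \geq (1-\gamma_k)P_k^-$ holds pointwise in the sample space. Taking the conditional expectation over $\gamma_k$ and then the full expectation, independence yields $\mathbb{E}_\gamma[P_k] \geq (1-\gamma)\mathbb{E}_\gamma[P_k^-] = (1-\gamma)\bigl(A\,\mathbb{E}_\gamma[P_{k-1}]A^T + Q\bigr)$. An induction on $k$ then closes the bound: assuming $\mathbb{E}_\gamma[P_{k-1}] \geq (1-\gamma)\underline{P}_{k-1}$ and using monotonicity of $X \mapsto AXA^T+Q$, the right-hand side is at least $(1-\gamma)\bigl((1-\gamma)A\underline{P}_{k-1}A^T + Q\bigr) = (1-\gamma)\underline{P}_k = \tilde{P}_k$, with the base case holding because $0 \leq \gamma \leq 1$.

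For the upper bound the key tool is matrix concavity combined with Jensen's inequality in the semidefinite order. Writing the measurement step in information form, $P_k = \gamma_k\bigl((P_k^-)^{-1} + C^TR^{-1}C\bigr)^{-1} + (1-\gamma_k)P_k^-$, the conditional mean $\mathbb{E}[P_k \mid P_k^-]$ is a monotone concave matrix function of $P_k^-$; the same holds for the predictor map $\Phi_\gamma(X) \triangleq AXA^T + Q - \gamma AXC^T(CXC^T+R)^{-1}CXA^T$, whose concavity and monotonicity are established in~\cite{sinopoli2004kalman}. I would first prove $\mathbb{E}_\gamma[P_k^-] \leq \overline{P}_k$: taking expectations and applying Jensen to the concave map $X \mapsto AXC^T(CXC^T+R)^{-1}CXA^T$ gives $\mathbb{E}_\gamma[P_k^-] \leq \Phi_\gamma(\mathbb{E}_\gamma[P_{k-1}^-])$, and a monotone induction anchored at the initial condition, together with the fixed-point characterization of $\overline{P}_k$ in the statement, propagates the bound. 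Applying Jensen once more to the concave conditional-mean map, and then monotonicity to replace $\mathbb{E}_\gamma[P_k^-]$ by its upper bound $\overline{P}_k$, yields $\mathbb{E}_\gamma[P_k] \leq \overline{P}_k - \gamma\,\overline{P}_kC^T(C\overline{P}_kC^T+R)^{-1}C\overline{P}_k = \hat{P}_k$, as required.

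I expect the main obstacle to be the upper bound, specifically the rigorous use of Jensen's inequality in the Loewner order, which rests on the operator concavity of $X \mapsto AXC^T(CXC^T+R)^{-1}CXA^T$ (equivalently of $X \mapsto (X^{-1}+M)^{-1}$) on the positive-definite cone and must be invoked with care. A secondary delicate point is anchoring the induction that gives $\mathbb{E}_\gamma[P_k^-] \leq \overline{P}_k$: this needs a suitable comparison of the initial covariance $P_0$ with the fixed point $\overline{P}_k$ so that the monotone recursion stays below it. By contrast, the lower bound requires only the elementary domination of the correction term and the independence of the drop process, and is therefore routine.
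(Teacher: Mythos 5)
Your argument is correct and is essentially the paper's own proof written out in full: the paper merely observes that $P_{k+1}^{-}$ and $P_k$ are concave, monotone functions of $P_k^{-}$ and defers to Lemma~5.2 of Schenato et al., which is exactly the machinery you develop --- operator concavity plus Jensen and a monotone induction for the upper bound, and elementary domination of the correction term plus independence of the drop process for the lower bound. The anchoring issue you flag is resolved the same way as in the cited lemma, by starting the bounding recursions from $P_0$, so no genuinely new idea is required.
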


\begin{proof}
    This lemma can be proved based on the observation that the matrices $P_{k+1}^{-}$ and $P_k$ are concave and monotonic functions of $P_k^{-}$. The proof can be easily derived based on Lemma 5.2 in~\cite{schenato2007foundations} and is thus omitted.
\end{proof}
Then, Theorem~\ref{thm0} holds.

\section{Proof of Theorem~\ref{thm1}}
    Following Lemma \ref{lm1} and Lemma \ref{lm2}, $\underline{\alpha}$ and $\overline{\alpha}$ must respectively satisfy equation (\ref{eq1}) and equation (\ref{eq2}). 
    Since $\eta(\cdot)$ and $\gamma(\cdot)$ are monotonically increasing continuous functions, equation (\ref{eq1}) and equation (\ref{eq2}) have one unique solution, respectively. 
    Then we have $\eta>\eta_c$ if and only if $\alpha>\underline{\alpha}$. 
    Similarly, we have $\gamma>\gamma_c$ if and only if $\alpha<\overline{\alpha}$. 
    Since $\lim_{k\rightarrow \infty}\underline{P}_k = \underline{P}$ and $\lim_{k\rightarrow \infty}\overline{P}_k = \overline{P}$, the lower bound $J_\infty^\mathrm{min}=\lim_{N\rightarrow \infty}\frac{1}{N}J_N^\mathrm{min}$ and the upper bound $J_\infty^\mathrm{max}=\lim_{N\rightarrow \infty}\frac{1}{N}J_N^\mathrm{max}$ can be derived as~(\ref{lowerbound}) and~(\ref{upperbound}) based on Theorem~\ref{thm0}, respectively.
    One can see that when $\gamma>\gamma_c>p_\mathrm{min}$, the solution of equation~(\ref{eq:up}) exist. 
    According to Lemma~\ref{lm1} and Lemma~\ref{lm2}, MARE~(\ref{eq:s2}) and MARE~(\ref{eq:op}) have a unique positive definite solution if and only if $\eta>\eta_c$ and $\gamma>\gamma_c$, respectively. Based on all of the above, we have Theorem~\ref{thm1}.

\section{Proof of Proposition~\ref{prop:1}}
It is easy to see that 
    \begin{align}
        S&=\hat{h}\circ\hat{g}_{\alpha}(S),\\
        \overline{P}&=\tilde{h}\circ\tilde{g}_{\alpha}(\overline{P}).
    \end{align}
    Let $\tilde{P}=\tilde{g}_\alpha(\overline{P})$, then we have
    \begin{equation}
        \begin{split}
            \tilde{P}&=\tilde{g}_\alpha(\tilde{h}\circ\tilde{g}_{\alpha}(\overline{P}))
            \\&=\tilde{g}_\alpha\circ\tilde{h}\circ\tilde{g}_{\alpha}(\overline{P})
            \\&=\tilde{g}_\alpha\circ\tilde{h}(\tilde{P}).
        \end{split}
    \end{equation}
    Then it can be seen that problem (\ref{problem}) is equivalent to problem~\eqref{problem2}. 
    
\end{appendices}

	\bibliographystyle{ieeetr}
	\bibliography{ref.bib}

\begin{thebibliography}{10}

\bibitem{nilsson1998stochastic}
J.~Nilsson, B.~Bernhardsson, and B.~Wittenmark, ``Stochastic analysis and
  control of real-time systems with random time delays,'' {\em Automatica},
  vol.~34, no.~1, pp.~57--64, 1998.

\bibitem{anderson2012optimal}
B.~D. Anderson and J.~B. Moore, {\em Optimal filtering}.
\newblock Courier Corporation, 2012.

\bibitem{schenato2007foundations}
L.~Schenato, B.~Sinopoli, M.~Franceschetti, K.~Poolla, and S.~S. Sastry,
  ``Foundations of control and estimation over lossy networks,'' {\em
  Proceedings of the IEEE}, vol.~95, no.~1, pp.~163--187, 2007.

\bibitem{gupta2007optimal}
V.~Gupta, B.~Hassibi, and R.~M. Murray, ``Optimal lqg control across
  packet-dropping links,'' {\em Systems \& Control Letters}, vol.~56, no.~6,
  pp.~439--446, 2007.

\bibitem{xu2022channel}
J.~Xu, G.~Gu, Y.~Tang, and F.~Qian, ``Channel modeling and lqg control in the
  presence of random delays and packet drops,'' {\em Automatica}, vol.~135,
  p.~109967, 2022.

\bibitem{sinopoli2004kalman}
B.~Sinopoli, L.~Schenato, M.~Franceschetti, K.~Poolla, M.~I. Jordan, and S.~S.
  Sastry, ``Kalman filtering with intermittent observations,'' {\em IEEE
  transactions on Automatic Control}, vol.~49, no.~9, pp.~1453--1464, 2004.

\bibitem{shi2011sensor}
L.~Shi, P.~Cheng, and J.~Chen, ``Sensor data scheduling for optimal state
  estimation with communication energy constraint,'' {\em Automatica}, vol.~47,
  no.~8, pp.~1693--1698, 2011.

\bibitem{han2013online}
D.~Han, P.~Cheng, J.~Chen, and L.~Shi, ``An online sensor power schedule for
  remote state estimation with communication energy constraint,'' {\em IEEE
  Transactions on Automatic Control}, vol.~59, no.~7, pp.~1942--1947, 2013.

\bibitem{ren2013optimal}
Z.~Ren, P.~Cheng, J.~Chen, L.~Shi, and Y.~Sun, ``Optimal periodic sensor
  schedule for steady-state estimation under average transmission energy
  constraint,'' {\em IEEE Transactions on Automatic Control}, vol.~58, no.~12,
  pp.~3265--3271, 2013.

\bibitem{wu2012event}
J.~Wu, Q.-S. Jia, K.~H. Johansson, and L.~Shi, ``Event-based sensor data
  scheduling: Trade-off between communication rate and estimation quality,''
  {\em IEEE Transactions on automatic control}, vol.~58, no.~4, pp.~1041--1046,
  2012.

\bibitem{han2015stochastic}
D.~Han, Y.~Mo, J.~Wu, S.~Weerakkody, B.~Sinopoli, and L.~Shi, ``Stochastic
  event-triggered sensor schedule for remote state estimation,'' {\em IEEE
  Transactions on Automatic Control}, vol.~60, no.~10, pp.~2661--2675, 2015.

\bibitem{li2013optimal}
Y.~Li, D.~E. Quevedo, V.~Lau, and L.~Shi, ``Optimal periodic transmission power
  schedules for remote estimation of arma processes,'' {\em IEEE Transactions
  on Signal Processing}, vol.~61, no.~24, pp.~6164--6174, 2013.

\bibitem{li2014multi}
Y.~Li, D.~E. Quevedo, V.~Lau, and L.~Shi, ``Multi-sensor transmission power
  scheduling for remote state estimation under sinr model,'' in {\em 53rd IEEE
  conference on decision and control}, pp.~1055--1060, IEEE, 2014.

\bibitem{wu2015data}
J.~Wu, Y.~Li, D.~E. Quevedo, V.~Lau, and L.~Shi, ``Data-driven power control
  for state estimation: A bayesian inference approach,'' {\em Automatica},
  vol.~54, pp.~332--339, 2015.

\bibitem{ding2017multi}
K.~Ding, Y.~Li, S.~Dey, and L.~Shi, ``Multi-sensor transmission management for
  remote state estimation under coordination,'' {\em IFAC-PapersOnLine},
  vol.~50, no.~1, pp.~3829--3834, 2017.

\bibitem{ding2020dynamic}
K.~Ding, X.~Ren, H.~Qi, G.~Shi, X.~Wang, and L.~Shi, ``Dynamic pricing for
  power control in remote state estimation,'' {\em IFAC-PapersOnLine}, vol.~53,
  no.~2, pp.~11038--11043, 2020.

\bibitem{huang2021joint}
L.~Huang, J.~Wu, Y.~Mo, and L.~Shi, ``Joint sensor and actuator placement for
  infinite-horizon lqg control,'' {\em IEEE Transactions on Automatic Control},
  vol.~67, no.~1, pp.~398--405, 2021.

\bibitem{nayyar2013optimal}
A.~Nayyar, T.~Ba{\c{s}}ar, D.~Teneketzis, and V.~V. Veeravalli, ``Optimal
  strategies for communication and remote estimation with an energy harvesting
  sensor,'' {\em IEEE Transactions on Automatic Control}, vol.~58, no.~9,
  pp.~2246--2260, 2013.

\bibitem{nourian2014optimal}
M.~Nourian, A.~S. Leong, and S.~Dey, ``Optimal energy allocation for kalman
  filtering over packet dropping links with imperfect acknowledgments and
  energy harvesting constraints,'' {\em IEEE Transactions on Automatic
  Control}, vol.~59, no.~8, pp.~2128--2143, 2014.

\bibitem{li2014transmission}
Y.~Li, D.~E. Quevedo, V.~Lau, S.~Dey, and L.~Shi, ``Transmission power
  scheduling for energy harvesting sensor in remote state estimation,'' {\em
  IFAC Proceedings Volumes}, vol.~47, no.~3, pp.~122--127, 2014.

\bibitem{8214104}
T.~D. Ponnimbaduge~Perera, D.~N.~K. Jayakody, S.~K. Sharma, S.~Chatzinotas, and
  J.~Li, ``Simultaneous wireless information and power transfer (swipt): Recent
  advances and future challenges,'' {\em IEEE Communications Surveys
  Tutorials}, vol.~20, no.~1, pp.~264--302, 2018.

\bibitem{4595260}
L.~R. Varshney, ``Transporting information and energy simultaneously,'' in {\em
  2008 IEEE International Symposium on Information Theory}, pp.~1612--1616,
  2008.

\bibitem{8306833}
H.~Lee, K.-J. Lee, H.~Kim, and I.~Lee, ``Joint transceiver optimization for
  miso swipt systems with time switching,'' {\em IEEE Transactions on Wireless
  Communications}, vol.~17, no.~5, pp.~3298--3312, 2018.

\bibitem{8444683}
J.~Tang, A.~Shojaeifard, D.~K.~C. So, K.-K. Wong, and N.~Zhao, ``Energy
  efficiency optimization for comp-swipt heterogeneous networks,'' {\em IEEE
  Transactions on Communications}, vol.~66, no.~12, pp.~6368--6383, 2018.

\bibitem{8951078}
K.~W. Choi, S.~I. Hwang, A.~A. Aziz, H.~H. Jang, J.~S. Kim, D.~S. Kang, and
  D.~I. Kim, ``Simultaneous wireless information and power transfer (swipt) for
  internet of things: Novel receiver design and experimental validation,'' {\em
  IEEE Internet of Things Journal}, vol.~7, no.~4, pp.~2996--3012, 2020.

\bibitem{9070210}
H.~{Yang}, Y.~{Ye}, X.~{Chu}, and M.~{Dong}, ``Resource and power allocation in
  swipt-enabled device-to-device communications based on a nonlinear energy
  harvesting model,'' {\em IEEE Internet of Things Journal}, vol.~7, no.~11,
  pp.~10813--10825, 2020.

\bibitem{9080059}
W.~Wang, J.~Tang, N.~Zhao, X.~Liu, X.~Y. Zhang, Y.~Chen, and Y.~Qian, ``Joint
  precoding optimization for secure swipt in uav-aided noma networks,'' {\em
  IEEE Transactions on Communications}, vol.~68, no.~8, pp.~5028--5040, 2020.

\bibitem{9398230}
H.~Yang, X.~Xia, J.~Li, P.~Zhu, and X.~You, ``Joint transceiver design for
  network-assisted full-duplex systems with swipt,'' {\em IEEE Systems
  Journal}, vol.~16, no.~1, pp.~1206--1216, 2022.

\bibitem{yang2022joint}
H.~Yang, M.~Huang, Y.~Li, S.~Dey, and L.~Shi, ``Joint power allocation for
  remote state estimation with swipt,'' {\em IEEE Transactions on Signal
  Processing}, 2022.

\end{thebibliography}
	
\end{document}